\newtheorem{thm}{Theorem}[section]
\newtheorem{cor}[thm]{Corollary}
\newcommand{\be}{\begin{equation*}}
\newcommand{\ee}{\end{equation*}}
\newcommand{\ben}{\begin{equation}}
\newcommand{\een}{\end{equation}}
\newcommand{\beqa}{\begin{eqnarray*}}
\newcommand{\eeqa}{\end{eqnarray*}}
\newcommand{\beqan}{\begin{eqnarray}}
\newcommand{\eeqan}{\end{eqnarray}}
\newcommand{\nn}{\nonumber}
\def\Z{\mathbb{Z}}
\def\R{\mathbb{R}}
\def\Hess{\mathrm{Hess}}
\def\Crit{\mathrm{Crit}}
\def\pd{\partial}
\def\reg{\mathrm{reg}}
\def\sing{\mathrm{sing}}
\def\crit{\mathrm{crit}}
\def\noncrit{\mathrm{noncrit}}
\def\infl{\mathrm{infl}}
\def\rd{\mathrm{d}}
\def\fM{\mathfrak{M}}
\def\dd{\mathrm{d}}
\def\bepsilon{\boldsymbol{\varepsilon}}
\def\veta{\boldsymbol{\eta}}
\def\hveta{{\hat \veta}}
\def\bOmega{\boldsymbol{\Omega}}
\newcommand{\Tr}{\mathrm{Tr}}
\def\cF{\mathcal{F}}
\def\cG{\mathcal{G}}
\def\cH{\mathcal{H}}
\def\cL{\mathcal{L}}
\def\cM{\mathcal{M}}
\def\cN{\mathcal{N}}
\def\cS{\mathcal{S}}
\def\cX{\mathcal{X}}
\def\heta{{\hat \eta}}
\def\hetap{\heta^\parallel}
\def\hepsilon{{\hat \bepsilon}}
\def\rS{\mathrm{S}}
\def\fM{\mathfrak{M}}
\def\fS{\mathfrak{S}}
\def\fs{\mathfrak{s}}
\def\f{{\bf f}}
\def\q{{\bf q}}
\def\kin{\mathrm{kin}}
\def\pot{\mathrm{pot}}
\newcommand{\eqdef}{\stackrel{{\rm def.}}{=}}
\def\red{\mathrm{red}}
\def\grad{\mathrm{grad}}
\def\vol{\mathrm{vol}}
\def\cR{\mathcal{R}}
\begin{document}

\markboth{C. I. Lazaroiu}{Natural observables and dynamical approximations in multifield cosmological models}


\title{Natural observables and dynamical approximations in multifield cosmological models}

\author{C. I. Lazaroiu}

\address{Departamento de Matematicas, Universidad UNED - Madrid\\
 Calle de Juan del Rosal 10, 28040, Madrid, Spain\\
clazaroiu@mat.uned.es\\
and\\
Horia Hulubei National
  Institute of Physics and Nuclear Engineering,\\
  Reactorului 30, Bucharest-Magurele, 077125, Romania\\
lcalin@theory.nipne.ro}

\maketitle


\begin{abstract}
I give a geometric construction of certain first order natural
dynamical observables in multifield cosmological models with arbitrary
target space topology and discuss a system of related dynamical
approximations and regimes for such models.

\keywords{cosmology; differential geometry; dynamical systems.}
\end{abstract}

\section{Introduction}

Cosmological models with more that one scalar field (``inflaton'') are
natural and preferred in quantum theories of gravity\cite{AP} and have
acquired increasing importance in recent years, being subject to
renewed study from various perspectives\cite{Lilia1,Lilia2,LiliaRev,
  Paban, AL, Noether1, Noether2, Hesse}.

Despite their importance in connecting cosmology to supergravity and
string theory, such models are poorly understood when compared to
their one field counterparts -- in particular because their dynamics
can be very involved already before considering
perturbations. Moreover, the scalar fields of such models can be
valued in an arbitrary connected Riemannian manifold (called {\em
  scalar manifold}) which is generally non-compact. Hence a systematic
study must consider scalar manifolds of arbitrary connected topology.

When the Hubble parameter is positive, the background dynamics of
multifield models can be described mathematically\cite{ren} using a
geometric dynamical system\cite{Palis, Katok} which is defined on the
tangent bundle to the scalar manifold. The general behavior of this
system can be extremely non-trivial already in the case of models with
two scalar fields \cite{genalpha, elem, modular, unif, Nis, Tim19}.

In general, one cannot hope to solve multifield cosmological dynamics
exactly or even to approach it numerically in a satisfactory manner
when the number of scalar fields (i.e. the dimension of the scalar
manifold) is large -- and even less so when the topology of the scalar
manifold is involved.  Hence the best hope for practical progress is
to develop approximation schemes which might allow one to make
progress -- such as the IR and UV expansions\cite{ren,grad}. To do
this in general, one needs a geometric description of the cosmological
observables which control various approximations and regimes of
physical interest while taking into account the topology of the scalar
manifold. In this paper, I give such a description for a collection of
natural cosmological observables and discuss some aspects of the
approximation schemes which they suggest.

The paper is organized as follows. Section \ref{sec:models} recalls
the description of general multifield cosmological models and gives a
geometric treatment of their scalar and vector observables. Section
\ref{sec:obs} gives a careful geometric construction of certain first
order observables (some of which are obtained by ``on-shell''
reduction of second order observables) for scalar manifolds of
arbitrary topology, discusses some relations between them and describes
various regions of interest which they define inside the tangent
bundle of the scalar manifold. Section \ref{sec:SR} discusses two
dynamical approximations which are suggested by the first slow roll
conditions. Section \ref{sec:CD} discusses the conservative and
dissipative approximations, which are controlled by the {\em
  conservative function} of the model -- one of the basic natural
observables introduced in Section \ref{sec:obs}. It also gives an
overview of other natural approximations and their relations to the
conservative and dissipative regimes. Section \ref{sec:Planck} briefly
describes the limits of large and small Planck mass, relating them to
the conservative and IR approximations. Section \ref{sec:conclusions}
presents our conclusions and a few directions for further research.

\paragraph{Notations and conventions.}

All manifolds considered in the paper are smooth and paracompact.  For
any manifold $\cM$, we let ${\dot T}\cM$ denote the slit tangent
bundle of $\cM$, defined as the complement of the image of the zero
section in $T\cM$. We denote by $\pi:T\cM\rightarrow \cM$ the bundle
projection and by $F:=F\cM\eqdef \pi^\ast(T\cM)$ the {\em Finsler
  bundle}\cite{SLK} of $\cM$, which is a bundle over $T\cM$. A section
of $F$ defined over an open subset of $T\cM$ is called a {\em Finsler
  vector field}. We denote by $\dot{F}\eqdef \pi^\ast(\dot{T}\cM)$ the
slit Finsler bundle of $T\cM$. A global section of $F$ is the same as
a map $f:T\cM\rightarrow T\cM$ which satisfies $f(u)\in T_{\pi(u)}\cM$
for all $u\in T\cM$, while a global section of ${\dot F}$ is a map of
this type which also satisfies $f(u)\neq 0$ for all $u$.

\section{Multifield cosmological models}
\label{sec:models}

In this section we recall the geometric description of multifield
cosmological models with arbitrary target space topology (in the
absence of cosmological perturbations) and give a geometric treatment
of their local observables. We also discuss some natural constructions
afforded by the Finsler bundle of the scalar manifold of such models,
which will be useful in later sections.

\subsection{Basics}

\noindent Throughout this paper, a {\em multifield cosmological model}
is a classical cosmological model with $d$ scalar fields (where $d\in
\Z_{>0}$ is a fixed positive integer) which is derived from the
following action on a spacetime with topology $\R^4$:
\ben
\label{S}
S[g,\varphi]=\int \vol_g \cL[g,\varphi]~~,
\een
where:
\ben
\label{cL}
\cL[g,\varphi]=\frac{M^2}{2} \mathrm{R}(g)-\frac{1}{2}\Tr_g \varphi^\ast(\cG)-\Phi\circ \varphi~~.
\een
Here $M$ is the reduced Planck mass, $g$ is the spacetime metric on
$\R^4$ (taken to be of ``mostly plus'' signature) while $\vol_g$ and
$\mathrm{R}(g)$ are the volume form and Ricci scalar of $g$. The
scalar fields are described by a smooth map $\varphi:\R^4\rightarrow
\cM$, where $\cM$ is a (generally non-compact) connected, smooth and
paracompact manifold of dimension $d$ which is endowed with a smooth
Riemannian metric $\cG$, while $\Phi:\cM\rightarrow \R$ is a smooth
function which plays the role of potential for the scalar fields. We
require that $\cG$ is complete to ensure conservation of energy. For
simplicity, we also assume that $\Phi$ is strictly positive on
$\cM$. The ordered system $(\cM,\cG,\Phi)$ is called the {\em scalar triple}
of the model while the Riemannian manifold $(\cM,\cG)$ is called the {\em scalar manifold}.
The model is parameterized by the quadruplet:
\ben
\fM=(M_0,\cM,\cG,\Phi)~~,
\een
where:
\be
M_0\eqdef M\sqrt{\frac{2}{3}}
\ee
is the {\em rescaled Planck mass}. We denote by:
\be
\Crit\Phi\eqdef \{m\in \cM~\vert~(\dd \Phi)(m)=0\}
\ee
the critical set of $\Phi$ and by:
\be
\cM_0\eqdef \cM\setminus \Crit\Phi
\ee
the {\em noncritical set} of $(\cM,\cG,\Phi)$. For any $E>0$,
we denote by:
\ben
\label{cME}
\cM(E)\eqdef \{m\in \cM~\vert~\Phi(m)<E\}
\een
the corresponding open subcritical level set of $\Phi$ and set:
\ben
\label{cM0E}
\cM_0(E)\eqdef \cM(E)\setminus \Crit \Phi~~.
\een

The multifield cosmological model parameterized by the quadruplet $\fM$ is obtained by assuming
that $g$ is an FLRW metric with flat spatial section:
\ben
\label{FLRW}
\dd s^2_g=-\dd t^2+a(t)^2\sum_{i=1}^3 \dd x_i^2
\een
(where $a$ is a smooth and strictly positive function) and that $\varphi$ depends only on the {\em
cosmological time} $t\eqdef x^0$. Define the {\em Hubble parameter}
through:
\be
H(t)\eqdef \frac{\dot{a}(t)}{a(t)}~~,
\ee
where the dot indicates derivation with respect to $t$.

\subsection{The cosmological equation}

\noindent When $H>0$ (which we assume throughout this paper), the
variational equations of \eqref{S} reduce to the {\em cosmological
equation}:
\ben
\label{eomsingle}
\nabla_t \dot{\varphi}(t)+\cH(\dot{\varphi}(t))\dot{\varphi}(t)+ (\grad_{\cG} \Phi)(\varphi(t))=0~~
\een
together with the condition:
\ben
\label{Hvarphi}
H(t)=H_\varphi(t)\eqdef \frac{1}{3}\cH(\dot{\varphi}(t))~~,
\een
where $\cH:T\cM\rightarrow \R_{>0}$ is the {\em rescaled Hubble function} of
$(\cM,\cG,\Phi)$, which is defined through:
\be
\cH(u)\eqdef \frac{1}{M_0}\left[||u||^2+2\Phi(\pi(u))\right]^{1/2}~~\forall u\in T\cM~~.
\ee
Here $\nabla_t\eqdef \nabla_{\dot{\varphi}(t)}$ and $\pi:T\cM\rightarrow \cM$ is
the bundle projection. The solutions
$\varphi:I\rightarrow \cM$ of \eqref{eomsingle} (where $I$ is a
non-degenerate interval) are called {\em cosmological curves}, while
their images in $\cM$ are called {\em cosmological orbits}.

A cosmological curve $\varphi:I\rightarrow \cM$ need not be an immersion.
Accordingly, we define the {\em singular and regular parameter sets}
of $\varphi$ through:
\beqan
&& I_\sing\eqdef \{t\in I~\vert~\dot{\varphi}(t)=0\}\subset I\\
&& I_\reg\eqdef I\setminus I_\sing= \{t\in I~\vert~\dot{\varphi}(t)\neq 0\}\subset I~~.
\eeqan
It can be shown that $I_\sing$ is an at most countable set.
The sets of {\em critical and noncritical times} of $\varphi$ are defined through:
\beqan
&& I_\crit\eqdef \{t\in I~\vert~\varphi(t)\in \Crit\Phi \}\nn\\
&& I_\noncrit \eqdef I\setminus I_\crit \eqdef \{t\in I~\vert~\varphi(t)\not\in \Crit\Phi\}~~.
\eeqan
The cosmological curve $\varphi$ is called {\em noncritical} if
$I_\crit=\emptyset$, i.e. if its orbit is contained in the non-critical set
$\cM_0$. It is easy to see that a cosmological curve $\varphi:I\rightarrow \cM$ 
is constant iff its orbit coincides with a critical point of $\Phi$, which
in turn happens iff there exists some $t\in I_\sing$ such that $\varphi(t)\in \Crit \Phi$.
Hence for any non-constant cosmological curve we have:
\be
I_\sing\cap I_\crit=\emptyset~~.
\ee

Given a cosmological curve $\varphi$, relation \eqref{Hvarphi}
determines $a$ up to a multiplicative constant. The cosmological
equation can be reduced to first order\cite{SLK} by passing to the tangent
bundle of $\cM$. This interpretation arises\cite{ren} by
viewing the second order time derivative $\ddot{\varphi}(t)$ appearing
in the expression:
\be
\nabla_t\dot{\varphi}^i(t)=\ddot{\varphi}(t)+\Gamma^i_{jk}(\varphi(t))\dot{\varphi}^j(t)\dot{\varphi}^k(t)
\ee
as an element of the double tangent bundle $TT\cM$ and the
opposite of the remaining terms of \eqref{eomsingle} as defining a
vector field $S\in \cX(T\cM)$ (called the {\em cosmological semispray}
or {\em second order vector field} of $(\cM,\cG,\Phi)$) which is
defined on $T\cM$. Then \eqref{eomsingle} is
equivalent with the integral curve equation of $S$. The flow of this
vector field on the total space of $T\cM$ is called the {\em
  cosmological flow} of the model.

\subsection{Cosmological observables}

\noindent For any $k\in \Z_{>0}$, let $\pi^k:J^k(\cM)\rightarrow \cM$ denote the bundle of $k$-order jets of curves in $\cM$.

\begin{definition}
An (off-shell) {\em scalar local cosmological
observable} of order $k\geq 0$ is a map $f:J^k(\cM)\rightarrow
\R$. The observable is called {\em basic} if it has order $k=1$; in
this case, $f$ is a real-valued function defined on $T\cM$.
\end{definition}

\noindent A simple example of basic cosmological observable is the
norm function $||~||:T\cM\rightarrow \R_{\geq 0}$ of $(\cM,\cG)$,
which is continuous everywhere but smooth only on the slit tangent
bundle ${\dot T}\cM$. Another example is the rescaled Hubble function
$\cH:T\cM\rightarrow \R_{\geq 0}$.

\begin{definition}
Given a cosmological observable $f$ of order $k$ and a 
curve $\varphi:I\rightarrow \cM$, the {\em evaluation} 
$f_\varphi:I\rightarrow \cM$ of $f$ along $\varphi$ is the function
defined though:
\be
f_\varphi \eqdef f\circ j^k(\varphi):I\rightarrow \R~~,
\ee
where $j^k(\varphi):I\rightarrow J^k(\cM)$ is the $k$-th jet
prolongation of $\varphi$. 
\end{definition}

\begin{remark}
Notice that a cosmological observable $f$ is completely determined by
its evaluation $f_\varphi$ on {\em arbitrary} curves $\varphi$ in
$\cM$. Accordingly, one can describe such an observable by giving its
evaluation on an arbitrary curve rather than using the jet bundle
description.
\end{remark}

The cosmological equation defines a codimension $d$ closed submanifold
$\fS$ (called the {\em cosmological shell}) of the total space of
$J^2(\cM)$, the second order jet bundle of curves of $\cM$ (which is a
bundle of rank $2d$ over $\cM$). Notice that $\fS$ is the image of a
section $\fs:J^1(\cM)\rightarrow J^2(\cM)$ of the natural surjection
$J^2(\cM)\rightarrow J^1(\cM)$ with local coordinate expression:
\be
\fs^i=-\Gamma^i_{jk}\dot{\varphi}^j\dot{\varphi}^k-\frac{1}{M_0} \left[\cG_{kl}(\varphi)\dot{\varphi}^k\dot{\varphi}^l+
  2\Phi(\varphi)\right]^{1/2}\dot{\varphi}^i- \cG^{ij}(\varphi)(\pd_j \Phi)(\varphi)~~.
\ee
In the jet bundle interpretation, the cosmological equation \eqref{eomsingle} reads:
\ben
\label{eoms}
\ddot{\varphi}^i=\fs^i(\varphi^1,\ldots, \varphi^d,\dot{\varphi}^1,\ldots,\dot{\varphi}^d)
\een
and amounts to the defining equations of $\fS$. We say that $\fs$ is
the {\em cosmological section} of $J^2(\cM)\rightarrow
J^1(\cM)$. Formal differentiation of \eqref{eoms} with respect to time defines
sections $\fs^{(k)}:J^k(\cM)\rightarrow J^{k+1}(\cM)$ of the projections
$J^{k+1}(\cM)\rightarrow J^k(\cM)$ for all $k\geq 1$, where
$\fs^{(1)}\eqdef \fs$. These determine the higher order prolongations
of \eqref{eoms}, whose iterative use allows us to express all formal
higher time derivatives of $\varphi$ in terms of $\varphi$ and
$\dot{\varphi}$. In turn, this gives sections
$\fs_k:J^1(\cM)\rightarrow J^{k+1}(\cM)$ of the natural maps
$J^{k+1}(\cM)\rightarrow J^1(\cM)$ for all $k\geq 1$.

\begin{definition}
The {\em dynamical reduction} (or {\em on-shell reduction}) of a
cosmological observable $f$ of order $k\geq 2$ is the basic
cosmological observable $f^\red:J^1(\cM)=T\cM\rightarrow \R$ defined
through:
\be
f^\red\eqdef f\circ \fs_{k-1}~~.
\ee
\end{definition}

\noindent The following statement is immediate:

\begin{prop}
For any cosmological curve $\varphi$, we have:
\be
f_\varphi=f^\red_\varphi=f^\red \circ \dot{\varphi}~~.
\ee
\end{prop}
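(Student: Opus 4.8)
The two asserted equalities are of quite different character, and I would treat them separately. The second equality $f^\red_\varphi=f^\red\circ\dot\varphi$ is immediate from the definitions: since $f^\red$ is basic, i.e.\ has order one, its evaluation along $\varphi$ is by definition $f^\red_\varphi=f^\red\circ j^1(\varphi)$, and under the canonical identification $J^1(\cM)=T\cM$ the first jet prolongation $j^1(\varphi)$ coincides with the velocity $\dot\varphi:I\to T\cM$. The substance of the statement therefore lies in the first equality $f_\varphi=f^\red_\varphi$. Unwinding both sides, $f_\varphi=f\circ j^k(\varphi)$ while $f^\red_\varphi=f\circ\fs_{k-1}\circ j^1(\varphi)$, so it suffices to establish the identity
\be
j^k(\varphi)=\fs_{k-1}\circ j^1(\varphi)
\ee
for every cosmological curve $\varphi$ and every $k\geq 2$. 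This is the one place where the hypothesis that $\varphi$ solves the cosmological equation enters.

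I would prove this identity by induction on $k$, exploiting the fact that $\fs_{k-1}$ is built from the iterated prolongations $\fs^{(j)}:J^j(\cM)\to J^{j+1}(\cM)$ via $\fs_{k-1}=\fs^{(k-1)}\circ\cdots\circ\fs^{(1)}$, so that $\fs_k=\fs^{(k)}\circ\fs_{k-1}$. For the base case $k=2$ one has $\fs_1=\fs$, and in local jet coordinates $j^2(\varphi)(t)=(\varphi(t),\dot\varphi(t),\ddot\varphi(t))$ while $\fs(j^1(\varphi)(t))=(\varphi(t),\dot\varphi(t),\fs(\varphi(t),\dot\varphi(t)))$; these agree precisely because \eqref{eoms} gives $\ddot\varphi^i(t)=\fs^i(\varphi(t),\dot\varphi(t))$. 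For the inductive step, assuming $j^k(\varphi)=\fs_{k-1}\circ j^1(\varphi)$, I would reduce the claim for $k+1$ to showing $j^{k+1}(\varphi)=\fs^{(k)}\circ j^k(\varphi)$, i.e.\ that the single new coordinate produced by $\fs^{(k)}$ reproduces $\varphi^{(k+1)}(t)$.

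Here I would invoke the defining property of the total (formal) time-derivative operator $\rD_t$ used to construct the prolongations: for any smooth function $g$ on $J^1(\cM)$ and any curve $\varphi$ one has $(\rD_t^{m}g)(j^{m+1}(\varphi)(t))=\frac{\dd^{m}}{\dd t^{m}}\big[g(j^1(\varphi)(t))\big]$. Since the coordinate appended by $\fs^{(k)}$ is exactly $(\rD_t^{k-1}\fs)(j^k(\varphi)(t))$, applying this identity with $g=\fs^i$ and $m=k-1$ and then substituting the on-shell relation $\fs^i(j^1(\varphi)(t))=\ddot\varphi^i(t)$ yields $(\rD_t^{k-1}\fs^i)(j^k(\varphi)(t))=\frac{\dd^{k-1}}{\dd t^{k-1}}\ddot\varphi^i(t)=\varphi^{(k+1),i}(t)$, completing the induction. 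Chaining the equalities then gives $f_\varphi=f\circ j^k(\varphi)=f\circ\fs_{k-1}\circ j^1(\varphi)=f^\red\circ\dot\varphi=f^\red_\varphi$.

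I expect the only genuinely delicate point to be the clean formulation of the last step — namely that the \emph{formal} differentiation used to define $\fs^{(k)}$ coincides with \emph{actual} time differentiation once one restricts to solutions of \eqref{eoms}. This is really a compatibility statement between the total-derivative construction of the prolongation sections $\fs_k$ and the jet prolongation $j^k(\varphi)$, and it is what makes the proposition ``immediate'': the sections $\fs_k$ were defined precisely so as to encode the on-shell higher time derivatives, so the remaining content is only to verify that the bookkeeping of jet coordinates matches, which the induction above accomplishes.
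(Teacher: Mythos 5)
Your proof is correct and is essentially the paper's argument: the paper states this proposition as immediate precisely because the sections $\fs_k$ are constructed by iterated formal differentiation so as to encode the on-shell higher time derivatives, which is exactly the compatibility $j^k(\varphi)=\fs_{k-1}\circ j^1(\varphi)$ that your induction verifies. You have simply written out in full the definitional unwinding that the paper leaves implicit.
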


One can also consider local observables of order $k$ defined on
non-empty open subsets $U\subset J^k(\cM)$; the considerations above
apply to this situation with very minor modifications. Finally,
one can define {\em vector local cosmological observables} of order
$k\in \Z_{>0}$ as local sections $\f$ of the {\em $k$-th Finsler
  bundle}:
\be
F^k\eqdef (\pi^k)^\ast(T\cM)\rightarrow J^k(\cM)
\ee
defined on a non-empty open set $U\subset J^k(\cM)$ (notice that
$F^1=F$ is the ordinary Finsler bundle of $\cM$). Such observables are
called {\em basic} if $k=1$, in which case $\f\in \Gamma(T\cM,F)$ can
be viewed as a map from an open subset $U$ of $T\cM$ to $T\cM$ which
satisfies $\f(u)\in T_{\pi(u)}\cM$ for all $u\in U$.  The evaluation
of a local vector observable $\f$ of order $k$ on a curve
$\varphi:I\rightarrow \cM$ whose $k$-th prolongation is contained in $U$ produces a section
$\f_\varphi\eqdef [j^k(\varphi)]^\ast(\f)\in
\Gamma(I,\varphi^\ast(T\cM))$, i.e. a map with associates to any $t\in
I$ a vector $\f_\varphi(t)\in T_{\varphi(t)}\cM$. The dynamical
reduction $\f^\red\eqdef (\fs_{k-1})^\ast(\f)$ is a local section of the Finsler bundle $F\rightarrow T\cM$ which
satisfies:
\be
\f_\varphi=(\dot{\varphi})^\ast(\f^\red)
\ee
when $\varphi$ is a cosmological curve. We will encounter such
vector observables later on.

\paragraph{The cosmological energy function and rescaled Hubble function.}

\

\begin{definition} The {\em cosmological energy function}
is the basic cosmological observable $E:T\cM\rightarrow \R_{>0}$ defined through:
\be
E(u)\eqdef \frac{1}{2}||u||^2+\Phi(\pi(u))~~\forall u\in T\cM~~.
\ee
The {\em cosmological
kinetic energy function} $E_\kin:T\cM\rightarrow \R_{\geq 0}$ and the
{\em cosmological potential energy function} $E_\pot:T\cM\rightarrow \R_{>0}$
are the basic cosmological observables defined through:
\ben
E_\kin(u)\eqdef \frac{1}{2}||u||^2~~,~~E_\pot(u)\eqdef \Phi(\pi(u))~~\forall u\in T\cM~~.
\een
\end{definition}

\noindent With these definitions, we have:
\be
E=E_\kin+E_\pot~~.
\ee
Notice that $E_\pot$ coincides with the natural lift of $\Phi$ to $T\cM$:
\be
E_\pot=\pi^\ast(\Phi)=\Phi\circ \pi~~.
\ee
Also notice the relation:
\be
\cH=\frac{1}{M_0}\sqrt{2E}~~.
\ee

\begin{prop}
The evaluation of the cosmological energy along any
cosmological curve $\varphi:I\rightarrow \cM$ satisfies the {\em
  cosmological dissipation equation}:
\ben
\label{DissipEq}
\frac{\dd E_\varphi(t)}{\dd t}=-\frac{\sqrt{2E_\varphi(t)}}{M_0}||\dot{\varphi}(t)||^2~~.
\een
\end{prop}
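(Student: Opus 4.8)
The plan is to evaluate $E$ along a cosmological curve $\varphi:I\rightarrow \cM$, differentiate the resulting real-valued function of $t$ directly, and then eliminate the second-order term $\nabla_t\dot\varphi$ using the cosmological equation \eqref{eomsingle}. Writing out the evaluation gives
\be
E_\varphi(t)=\frac{1}{2}\|\dot\varphi(t)\|^2+\Phi(\varphi(t))=\frac{1}{2}\cG\big(\dot\varphi(t),\dot\varphi(t)\big)+\Phi(\varphi(t))~~,
\ee
which is a genuinely smooth function of $t$: although the norm function on $T\cM$ is smooth only on $\dot T\cM$, the quadratic pairing $\cG(\dot\varphi,\dot\varphi)$ is smooth everywhere, so we may differentiate termwise.

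For the kinetic term I would invoke metric compatibility of the Levi-Civita connection $\nabla$ of $\cG$ (recall $\nabla_t\eqdef \nabla_{\dot\varphi(t)}$), which yields
\be
\frac{\dd}{\dd t}\,\frac{1}{2}\cG\big(\dot\varphi,\dot\varphi\big)=\cG\big(\nabla_t\dot\varphi,\dot\varphi\big)~~.
\ee
For the potential term, the defining property of the gradient gives
\be
\frac{\dd}{\dd t}\,\Phi(\varphi(t))=(\dd\Phi)_{\varphi(t)}\big(\dot\varphi(t)\big)=\cG\big((\grad_\cG\Phi)(\varphi(t)),\dot\varphi(t)\big)~~.
\ee
Adding the two contributions and collecting the pairing against $\dot\varphi$, we obtain
\be
\frac{\dd E_\varphi}{\dd t}=\cG\big(\nabla_t\dot\varphi+(\grad_\cG\Phi)(\varphi),\dot\varphi\big)~~.
\ee

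The final step is to substitute the cosmological equation \eqref{eomsingle}, rewritten as $\nabla_t\dot\varphi+(\grad_\cG\Phi)(\varphi)=-\cH(\dot\varphi)\dot\varphi$, which collapses the inner product to $-\cH(\dot\varphi)\|\dot\varphi\|^2$. Recognizing from the relation $\cH=\frac{1}{M_0}\sqrt{2E}$ that $\cH(\dot\varphi(t))=\frac{1}{M_0}\sqrt{2E_\varphi(t)}$ then produces the claimed identity \eqref{DissipEq}. I do not expect any genuine obstacle: the only point requiring care is the interchange of time differentiation with the metric pairing, which is precisely what metric compatibility of $\nabla$ licenses; everything else is a direct substitution of the equation of motion and an algebraic rewriting of $\cH$ in terms of $E_\varphi$.
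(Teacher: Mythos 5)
Your proof is correct and is exactly the computation the paper compresses into its one-line proof (``follows immediately by using the cosmological equation''): differentiate $E_\varphi$, use metric compatibility and the definition of the gradient, substitute \eqref{eomsingle}, and rewrite $\cH=\frac{1}{M_0}\sqrt{2E}$. Your added care about smoothness of $\cG(\dot\varphi,\dot\varphi)$ versus the norm function is a nice touch but does not change the route.
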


\begin{proof}
Follows immediately by using the cosmological equation \eqref{eomsingle}.
\end{proof}

Notice that the rescaled Hubble function $\cH:T\cM\rightarrow \R_{>0}$ is a basic cosmological
observable. The dissipation equation is equivalent with:
\ben
\label{dcH}
\frac{\dd \cH_\varphi(t)}{\dd t}=-||\dot{\varphi}(t)||^2
\een
for any cosmological curve $\varphi$. Integrating \eqref{dcH} from $t_1$ to $t_2$
along $\varphi$ gives:
\be
\cH_\varphi(t_1)-\cH_\varphi(t_2)=\int_{t_1}^{t_2}\dd t ||\dot{\varphi}(t)||^2
\ee
Since the integral in the right hand side is minimized by the shortest geodesic
which connects the points $\varphi(t_1)$ to $\varphi(t_2)$ in $\cM$, we have:
\be
\int_{t_1}^{t_2}\dd t ||\dot{\varphi}(t)||^2\geq \rd(\varphi(t_1),\varphi(t_2))^2~~,
\ee
where $\rd$ is the distance function of $(\cM,\cG)$. This gives the {\em Hubble inequality}:
\ben
\label{cHineq}
\cH_\varphi(t_1)-\cH_\varphi(t_2)\geq \rd(\varphi(t_1),\varphi(t_2))^2~~.
\een

\subsection{Some operations determined by the Finsler bundle of $\cM$}

\noindent Recall that $\pi:T\cM\rightarrow \cM$ denotes the bundle
projection. The following two fiber bundles defined on $T\cM$:
\be
F\eqdef \pi^\ast(T\cM)~~,~~\dot{F}\eqdef \pi^\ast(\dot{T}\cM)\subset F
\ee
are called respectively the {\em Finsler} and {\em slit Finsler
  bundle} of $T\cM$ (see \cite{SLK}). The $\pi$-pullback of $\cG$ (which we
denote by the same letter) makes $F$ into a Euclidean vector bundle, which contains
$\dot{F}$ as open fiber sub-bundle. The sections of $F$ defined over
a non-empty open subset $U\subset T\cM$ are called {\em Finsler vector fields}.
A Finsler vector field can be viewed as a map $f:U\rightarrow T\cM$
which satisfies the condition:
\be
\pi\circ f=\pi\vert_U~~,
\ee
i.e. which takes a vector tangent to $\cM$ at a point into vector
tangent to $\cM$ {\em at the same point}. The Finsler bundle
allows us to give a global geometric description to various 
operations on tangent vectors. 

\paragraph{The normalization map.}

The {\em  normalization map} of $(\cM,\cG)$ is the nowhere-vanishing Finsler
vector field $T\in \Gamma({\dot T}\cM, \dot{F})$ defined on
$\dot{T}\cM$ through:
\be
T(u)\eqdef \frac{u}{||u||} \in {\dot T} \cM~~\forall u\in \dot{T}\cM~~.
\ee

\begin{remark}
Given a curve $\varphi:I\rightarrow \cM$ and $t\in I_\reg$, the unit tangent vector
to $\varphi$ at time $t$ is given by:
\be
T_\varphi(t)=T(\dot{\varphi}(t))~~.
\ee
The map $T_\varphi:I_\reg\rightarrow T\cM$ identifies with the pull-back of $T$ through the
restriction of $\dot{\varphi}$ to $I_\reg$. 
\end{remark}

\paragraph{The parallel and normal projection of vector fields.}

For any vector field $X$ defined on $\cM$, let $X^\parallel, X^\perp\in
\Gamma({\dot T}\cM,F)$ be the Finsler vector fields defined on ${\dot
T}\cM$ through:
\beqa
&& X^\parallel(u)\eqdef \cG(T(u),X(\pi(u)))T(u)\in T_{\pi(u)}\cM\\
&& X^\perp(u)\eqdef X(\pi(u))-X^\parallel(u)\in T_{\pi(u)}\cM~~,
\eeqa
for all $u\in \dot{T}\cM$. For any $u\in \dot{T}\cM$, we have:
\be
\cG(X^\parallel(u),X^\perp(u))=0~~,~~X(u)=X^\parallel(u)+X^\perp(u)~~.
\ee
The vectors $X^\parallel(u)$ and $X^\perp(u)$ are respectively the
orthogonal projections of the vector $X(\pi(u))\in T_{\pi(u)}\cM$ on
$u$ and on the hyperplane $\Pi(u)\subset T_{\pi(u)}\cM$ which is
orthogonal to $u$ in the Euclidean vector space
$(T_{\pi(u)}\cM,\cG_{\pi(u)})$.

\section{Natural basic observables}
\label{sec:obs}

In this section, we give a geometric description of certain basic
cosmological observables which play a special role in the study of
multifield cosmological models. Some of these are defined directly on
$T\cM$, while others are obtained as the on-shell reduction of
second order observables. Since cosmological observables are
determined by their evaluation on arbitrary curves (see Section
\ref{sec:models}), we will use that description instead of the jet
bundle formulation.

\subsection{The first IR function of a scalar triple}

\noindent A first natural basic observable is the {\em first
  IR function} considered in \cite{ren}, which plays a crucial role in
the infrared dynamics of multifield cosmological models. This is
closely related to the {\em first slow roll
  function} (which is the on-shell reduction of a second order observable).

\begin{definition}
The {\em first IR function} of the scalar triple $(\cM,\cG,\Phi)$ is
the basic cosmological observable $\kappa:T\cM\rightarrow \R_{\geq 0}$
defined through:
\ben
\label{kappa}
\kappa(u)\eqdef \frac{E_\kin(u)}{E_\pot(u)}=\frac{||u||^2}{2\Phi(\pi(u))}~~\forall u\in T\cM~~.
\een
\end{definition}

\noindent Notice that $\kappa$ is continuous on
$T\cM$ but smooth only on the slit tangent bundle $\dot{T}\cM$. Also
notice that $\kappa(u)$ depends only on $\pi(u)\in \cM$ and $||u||\in
\R_{\geq 0}$, i.e. for all $m\in \cM$ we have:
\be
\kappa(u)=\kappa_m(||u||)~~\forall u\in T_m\cM~~,
\ee
where $\kappa_m:\R_{\geq 0}\rightarrow \R_{\geq 0}$ is the map defined
through:
\be
\kappa_m(x)\eqdef \frac{x^2}{2\Phi(m)}~~\forall x\in \R_{\geq 0}~~.
\ee

\paragraph{The first IR ``parameter'' of a cosmological curve.}

\

\begin{definition}
The {\em first IR ``parameter''} of the smooth curve $\varphi:I\rightarrow
\cM$ is the evaluation $\kappa_\varphi:I\rightarrow \R_{\geq 0}$ of $\kappa$ along $\varphi$:
\be
\kappa_{\varphi}(t)\eqdef \kappa(\dot{\varphi}(t))=\frac{||\dot{\varphi}(t)||^2}{2\Phi(\varphi(t))}~~.
\ee
\end{definition}

\noindent Notice that $\kappa_\varphi(t)$ vanishes when $t\in
I_\sing$, so $\kappa_\varphi$ is {\em not} a true parameter for $\varphi$
when $I_\sing\neq \emptyset$.  The first IR function allows us to
express the norm of any tangent vector $u\in T\cM$ as:
\ben
\label{kappanorm}
||u||=\sqrt{2E_\pot(u)\kappa(u)}=\sqrt{2\Phi(\pi(u))\kappa(u)}~~.
\een
For any $u\in \dot{T}\cM$, we have:
\be
T(u)=\frac{1}{\sqrt{2\Phi(\pi(u)) \kappa(u)}} \, u~~.
\ee
Also notice the relation:
\ben
\label{cHkappa}
\cH(u)=\frac{1}{M_0} \sqrt{2\Phi(\pi(u))}[1+\kappa(u)]^{1/2}~~.
\een

\paragraph{The first slow roll function of a scalar triple.}

\noindent It is traditional in cosmology to use a basic observable
which encodes the same information as $\kappa$ but is more convenient
for certain purposes. We give a geometric description of this below.

\begin{definition}
The {\em rescaled first slow roll function} of $(\cM,\cG,\Phi)$ is the
basic cosmological observable ${\hat \bepsilon}:T\cM\rightarrow
\R_{\geq 0}$ defined through:
\be
{\hat \bepsilon}(u)\eqdef \frac{E_\kin(u)}{E(u)}=\frac{\kappa(u)}{1+\kappa(u)}=\frac{||u||^2}{||u||^2+2\Phi(\pi(u))}~~\forall u\in T\cM~~,
\ee
while the {\em first slow roll function} of $(\cM,\cG,\Phi)$ is the
basic cosmological observable $\bepsilon:T\cM\rightarrow
\R_{\geq 0}$ given by:
\be
\bepsilon\eqdef 3{\hat \bepsilon}~~.
\ee
\end{definition}

\noindent We have:
\be
\hepsilon=f\circ \kappa~~,
\ee
where $f:\R_{\geq 0}\rightarrow [0,1)$ is the function defined through:
\be
f(x)\eqdef \frac{x}{1+x}~~.
\ee
This function has first derivative given by:
\be
f'(x)=\frac{1}{(1+x)^2}>0
\ee
and hence is strictly increasing. Notice the equivalence:
\be
f(x)=A\Longleftrightarrow x=\frac{A}{1-A}
\ee
and the relation:
\be
f(x)\approx x~~\mathrm{for}~x\ll 1~~.
\ee
We have ${\hat \bepsilon}=0$ iff $\kappa=0$ and $\bepsilon=1$
(i.e. ${\hat \bepsilon}=1/3$) iff $\kappa=1/2$. For $x>0$, the
inequality $\kappa(u)<x$ is equivalent with ${\hat
  \bepsilon}(u)<\frac{x}{1+x}$.

\paragraph{The first slow roll ``parameter'' of a cosmological curve.}

\noindent The traditional notion used in the cosmology literature is as follows:

\begin{definition}
The {\em first slow roll ``parameter''} of a curve
$\varphi:I\rightarrow \cM$ is the function
$\bepsilon_\varphi:I\rightarrow \R$ defined through:
\be
\bepsilon_{\varphi}(t)\eqdef -\frac{\dot{H_\varphi}(t)}{H_\varphi(t)^2}=-\frac{1}{H_\varphi(t)}\frac{\dd}{\dd t} \log H_\varphi(t)=3{\hat \bepsilon}_\varphi(t)~~,
\ee
where:
\be
{\hat \bepsilon}_\varphi(t)\eqdef -\frac{1}{\cH_\varphi(t)}\frac{\dd}{\dd t} \log \cH_\varphi(t)
\ee
is the {\em rescaled first slow roll ``parameter''} of $\varphi$.
\end{definition}

\begin{prop}
\label{prop:epsred}
Suppose that $\varphi:I\rightarrow \cM$ is a cosmological curve. Then
the (rescaled) first slow roll parameter of $\varphi$ coincides with
the on-shell reduction of the (rescaled) first slow roll function along
$\varphi$, i.e. we have:
\be
{\hat \bepsilon}_\varphi(t)={\hat \bepsilon}(\dot{\varphi}(t))~~\mathrm{and}~~ \bepsilon_\varphi(t)=\bepsilon(\dot{\varphi}(t))~~\forall t\in I~~.
\ee
\end{prop}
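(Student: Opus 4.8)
The plan is to reduce the statement to its rescaled version $\hepsilon_\varphi(t)=\hepsilon(\dot{\varphi}(t))$ and then obtain the unrescaled one for free. Indeed, $\bepsilon=3\hepsilon$ holds identically as basic observables on $T\cM$, while the definition gives $\bepsilon_\varphi=3\hepsilon_\varphi$ along the curve; this is consistent with $H_\varphi=\frac{1}{3}\cH_\varphi$, since a positive constant rescaling of a function does not affect its logarithmic derivative. Hence, once the rescaled identity is established, multiplying by $3$ yields $\bepsilon_\varphi(t)=3\hepsilon_\varphi(t)=3\hepsilon(\dot{\varphi}(t))=\bepsilon(\dot{\varphi}(t))$. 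The conceptual content is therefore to show that the \emph{dynamical} definition of $\hepsilon_\varphi$ as a logarithmic time derivative of $\cH_\varphi$ reproduces the purely \emph{kinematical} algebraic expression $\hepsilon(u)=\frac{\|u\|^2}{\|u\|^2+2\Phi(\pi(u))}$ evaluated at $u=\dot{\varphi}(t)$.

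The heart of the argument is the computation of $\dot{\cH}_\varphi$. First I would rewrite the definition as $\hepsilon_\varphi(t)=-\cH_\varphi(t)^{-2}\,\dot{\cH}_\varphi(t)$. To evaluate $\dot{\cH}_\varphi$ I would start from $\cH_\varphi(t)=\frac{1}{M_0}[\|\dot{\varphi}(t)\|^2+2\Phi(\varphi(t))]^{1/2}$ and differentiate, using metric compatibility of the Levi-Civita connection in the form $\frac{\dd}{\dd t}\|\dot{\varphi}\|^2=2\cG(\nabla_t\dot{\varphi},\dot{\varphi})$ together with the chain rule $\frac{\dd}{\dd t}\Phi(\varphi)=\cG(\grad_\cG\Phi(\varphi),\dot{\varphi})$. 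The two contributions combine into $2\,\cG(\nabla_t\dot{\varphi}+\grad_\cG\Phi(\varphi),\dot{\varphi})$, at which point I would substitute the cosmological equation \eqref{eomsingle} written as $\nabla_t\dot{\varphi}+\grad_\cG\Phi(\varphi)=-\cH(\dot{\varphi})\dot{\varphi}$. The friction term, contracted with $\dot{\varphi}$, collapses to $-\cH(\dot{\varphi})\|\dot{\varphi}\|^2$, which is precisely the content of the dissipation equation \eqref{DissipEq} (equivalently \eqref{dcH}). This is the only place where the hypothesis that $\varphi$ is a cosmological curve, rather than an arbitrary smooth curve, enters.

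Substituting the resulting expression back, the friction coefficient $\cH(\dot{\varphi})=\cH_\varphi$ cancels against the half-power $\cH_\varphi^{-1}$ produced by differentiating the square root, so that $\dot{\cH}_\varphi$ is proportional to $\|\dot{\varphi}\|^2$ alone; dividing by $\cH_\varphi^2$ then leaves $\hepsilon_\varphi(t)$ proportional to $\|\dot{\varphi}(t)\|^2/\cH_\varphi(t)^2$. Finally I would eliminate $\cH_\varphi^2$ in favor of the potential by using the defining relation $M_0^2\,\cH_\varphi(t)^2=\|\dot{\varphi}(t)\|^2+2\Phi(\varphi(t))$, so that all powers of $M_0$ cancel and the quotient becomes exactly $\frac{\|\dot{\varphi}(t)\|^2}{\|\dot{\varphi}(t)\|^2+2\Phi(\varphi(t))}=\hepsilon(\dot{\varphi}(t))$, as claimed.

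I do not expect a genuine obstacle: the statement is a direct consequence of the cosmological equation and of metric compatibility. The only points requiring care are the bookkeeping of the constant $M_0$ (one must check that the normalization in $\cH^2=\frac{1}{M_0^2}[\|u\|^2+2\Phi(\pi(u))]$ cancels against the constant appearing in $\dot{\cH}_\varphi$) and the behavior at singular times $t\in I_\sing$. At such times $\dot{\varphi}(t)=0$, so both sides vanish; moreover, although the bare norm $\|\cdot\|$ is only continuous along the zero section, both $\cH$ and $\hepsilon$ are smooth on all of $T\cM$ because $\Phi>0$, so $\cH_\varphi$ is differentiable on the whole interval $I$ and the computation above remains valid there.
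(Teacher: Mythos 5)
Your proposal is correct and follows essentially the same route as the paper: both differentiate the rescaled Hubble function along the curve (you take $\dot{\cH}_\varphi$ directly, the paper takes $\frac{\dd}{\dd t}\log\cH_\varphi$, a trivial difference), use metric compatibility plus the chain rule, and then substitute the cosmological equation contracted with $\dot{\varphi}$ to collapse the numerator to $-\cH_\varphi\|\dot{\varphi}\|^2$, after which the identification with $\hepsilon(\dot{\varphi}(t))$ is immediate. Your extra remarks on the factor of $3$ and on smoothness of $\cH$ and $\hepsilon$ at singular times (valid because $\Phi>0$) are sound refinements of bookkeeping the paper leaves implicit.
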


\begin{proof}
We have:
\be
\frac{\dd}{\dd t} \log \cH_\varphi(t)=\frac{\cG(\dot{\varphi}(t),\nabla_t \dot{\varphi}(t))+(\dd \Phi)(\varphi(t))(\dot{\varphi}(t))}{||\dot{\varphi}(t)||^2+2\Phi(\varphi(t))}=
-\frac{\cH_\varphi(t)||\dot{\varphi}(t)||^2}{||\dot{\varphi}(t)||^2+2\Phi(\varphi(t))}=-\frac{\cH_\varphi(t)\kappa_\varphi(t)}{1+\kappa_\varphi(t)}~~,
\ee
where in the second equality we used the cosmological equation and the
relation:
\be
\cG(\dot{\varphi}(t),(\grad \Phi)(\varphi(t)))=(\dd
\Phi)(\varphi(t))(\dot{\varphi}(t))~~.
\ee
This immediately implies the conclusion. 
\end{proof}

\begin{remark}
Since $\cH$ is a first order observable, the first slow roll parameter
of $\varphi$ is the evaluation along $\varphi$ of a {\em second order}
observable (the ``first slow roll observable''), whose on-shell
reduction coincides with the first slow roll function by Proposition
\ref{prop:epsred}. Hence the first IR parameter encodes the same information
as the on-shell reduction of the second order first slow roll observable.
\end{remark}

\noindent Notice that $\hepsilon_\varphi(t)$ vanishes when $t\in
I_\sing$, so it is {\em not} a true parameter for
$\varphi$ when $I_\sing \neq \emptyset$. The condition for inflation
can be formalized as follows (see \cite{modular}). 

\paragraph{The inflation region.}

\

\begin{definition}
The {\em inflation region} of $(\cM,\cG,\Phi)$ is the subset of $T\cM$ defined through:
\be
\cR(\cM,\cG,\Phi)\eqdef \{u\in T\cM~\vert~||u||^2<\Phi(\pi(u))\}=\{u\in T\cM~\vert~\kappa(u)<1/2 \}=\{u\in T\cM~\vert~\bepsilon(u)<1 \}~~.
\ee
A cosmological curve $\varphi:I\rightarrow \cM$ is called {\em
inflationary at time $t\in I$} if $\dot{\varphi}(t)\in
\cR(\cM,\cG,\Phi)$, which amounts to the usual
condition for inflation:
\be
\bepsilon_\varphi(t)<1\Longleftrightarrow {\hat \bepsilon}_\varphi(t)<1/3\Longleftrightarrow \kappa_\varphi(t)<1/2~~.
\ee
The cosmological curve is called {\em inflationary on the non-empty
interval $J\subset I$} if $\varphi$ is inflationary for all $t\in J$.
\end{definition}

\noindent Notice that $\cR(\cM,\cG,\Phi)$ is an open tubular neighborhood of
the zero section of $T\cM$. For any cosmological curve
$\varphi:I\rightarrow \cM$, the set of its inflationary times:
\be
I_\infl\eqdef \varphi^{-1}(\cR(\cM,\cG,\Phi))\subset I
\ee
is a (possibly empty) relatively open subset of $I$ and hence it is the
intersection of $I$ with an at most countable disjoint union of open
intervals of the real axis.

\paragraph{The first slow roll condition with parameter $\epsilon$.}

\

\begin{definition}
Let $\epsilon\in (0,1]$. We say that a vector $u\in T\cM$ satisfies the {\em first slow roll
condition} with parameter $\epsilon$ if:
\be
\kappa(u)<\epsilon~~,~~\mathrm{i.e.}~~||u||^2< 2\Phi(\pi(u))\, \epsilon~~.
\ee
The {\em first slow roll region} of $(\cM,\cG,\Phi)$ at parameter
$\epsilon$ is the open subset of $T\cM$ defined through:
\be
\cS^1_\epsilon(\cM,\cG,\Phi)\eqdef \{u\in T\cM~\vert~\kappa(u)<\epsilon\}=\{u\in T\cM~\vert~||u||< \sqrt{2\epsilon \Phi(\pi(u))}\}\subset T\cM~~.
\ee
\end{definition}

\noindent Since $\Phi>0$, the image of the zero section of $T\cM$ is
contained in $\cS^1_\epsilon(\cM,\cG,\Phi)$ and hence we have:
\ben
\label{picS1}
\pi(\cS^1_\epsilon(\cM,\cG,\Phi))=\cM~~.
\een

\begin{definition}
We say that a cosmological curve $\varphi:I\rightarrow \cM$ satisfies
the first slow roll condition with parameter $\epsilon\in (0,1]$ at
time $t\in I$ if $\dot{\varphi}(t)\in \cS^1_\epsilon(\cM,\cG,\Phi)$.
\end{definition}

Since $\cS^1_\epsilon(\cM,\cG,\Phi)$ is an open subset of $T\cM$ and
the canonical lift $\dot{\varphi}:I\rightarrow T\cM$ of $\varphi$ is
continuous, the set:
\be
I_\epsilon^1\eqdef \dot{\varphi}^{-1}(\cS^1_\epsilon(\cM,\cG,\Phi))\subset I
\ee
is an open subset of $I$. When non-empty, the set $I^1_\epsilon$ is an at most
countable union of relatively open subintervals of $I$.

\subsection{The conservative function of a scalar triple}

\noindent A second natural basic observable is the {\em conservative
  function}, which controls the {\em conservative
  approximation} discussed in Section \ref{sec:cons}. This is
defined as the inverse of the norm of a Finsler vector
field which is naturally associated to every scalar triple.

\begin{definition}
\label{def:relgrad}
The {\em relative gradient field} of $(\cM,\cG,\Phi)$ is the Finsler
vector field $\q\in \Gamma({\dot T}\cM,F)$ defined on $\dot{T}\cM$
through:
\ben
\label{q}
\q(u)\eqdef \frac{(\grad \Phi)(\pi(u))}{\cH(u) ||u||}= M_0 \frac{(\grad \Phi)(\pi(u))}{||u|| [||u||^2+2\Phi(\pi(u))]^{1/2}}~~.
\een
\end{definition}

\noindent Notice that $\q(u)=0$ iff $\pi(u)\in \Crit\Phi$. The
relative gradient field is a basic {\em vector} local observable in
the sense of Section \ref{sec:models}.

\begin{definition}
The {\em characteristic one-form} $\Xi\in \Omega^1(\cM)$ of the model is defined through:
\ben
\label{omega}
\Xi\eqdef\frac{M_0}{2\Phi} \dd\Phi=\frac{M_0}{2}\dd \log\Phi~~.
\een
\end{definition}

\noindent The pointwise norm $||\Xi||$ vanishes exactly on the critical set $\Crit\Phi$.
For any $u\in \dot{T}\cM_0$, we have:
\be
\q(u)= \frac{||\Xi(\pi(u))||}{[\kappa(u)(1+\kappa(u)]^{1/2}}n_\Phi(\pi(u))~~,
\ee
where $n_\Phi\in \cX(\cM_0)$ is the {\em normalized gradient field} of $\Phi$:
\be
n_\Phi(m)\eqdef \frac{(\grad\Phi)(m)}{||(\dd\Phi)(m)||}~~\forall m\in\cM_0~~.
\ee

\begin{definition}
\label{def:cons}
The {\em conservative function} of $(\cM,\cG,\Phi)$ is the function
$c:T\cM_0\rightarrow \R_{\geq 0}$ defined through:
\ben
\label{cdef1}
c(u)\eqdef \frac{1}{||\q(u)||}=\frac{1}{M_0}\frac{||u||\left[||u||^2+2\Phi(\pi(u))\right]^{1/2}}{||(\dd \Phi)(\pi(u))||}~~\forall u\in T\cM_0~~,
\een
i.e.: 
\ben
\label{ckappaomega}
c(u)=\frac{ \left[\kappa(u)(1+\kappa(u))\right]^{1/2}}{||\Xi(\pi(u))||}~~\forall u\in T\cM_0~~.
\een
\end{definition}

\noindent Thus:
\be
\q(u)=\frac{1}{c(u)}n_\Phi(\pi(u))~~\forall u\in T\cM_0~~.
\ee
Notice that $c(u)$ depends only on $m\eqdef \pi(u)\in \cM_0$
and on $||u||$, i.e. we have:
\be
c(u)=c_m(||u||)~~\forall u\in T_m\cM_0~~,
\ee
where $c_m:\R_{\geq 0}\rightarrow \R_{\geq 0}$ is the map defined through:
\be
c_m(x)\eqdef  \frac{1}{M_0}\frac{x\left[x^2+2\Phi(m)\right]^{1/2}}{||(\dd \Phi)(m)||}~~.
\ee

\begin{remark}
\label{rem:f}
Let $\Lambda>0$. The function $f:\R\rightarrow \R$ defined through
$f(x)=x(1+x)$ has a minimum at $x=-1/2$ and is strictly increasing
when $x>-1/2$. Thus $f(\R_+)=[0,+\infty)$ and a non-negative $x$
satisfies $f(x)\leq \Lambda$ iff $x\leq x_+$ where: \be
x_+=\frac{1}{2}[-1+\sqrt{1+4\Lambda}] \ee is the non-negative solution
of the equation $f(x)=\Lambda$.
\end{remark}

The first IR function $\kappa$ and the conservative function $c$
determine each other through relation \eqref{ckappaomega}. Using
Remark \ref{rem:f}, this relation can be solved for $\kappa(u)$ as:
\ben
\label{kappau}
\kappa(u)=\frac{1}{2}\left[-1+\sqrt{1+4c(u)^2 ||\Xi(\pi(u))||^2}\right]~~.
\een
In particular, we have $\kappa(u)\ll 1$ iff $c(u)||\Xi(\pi(u))||\ll
1$. When this condition is satisfied, relation \eqref{kappau} becomes
$\kappa(u)\approx c(u)^2 ||\Xi(\pi(u))||^2$.

\paragraph{The conservative ``parameter'' of a non-constant cosmological curve.}

\

\begin{definition}
The {\em conservative ``parameter''} of a non-constant cosmological curve
$\varphi:I\rightarrow \cM_0$ is the function $c_\varphi:I\rightarrow
\R_{\geq 0}\cup\{+\infty\}$ defined through:
\be
c_\varphi(t)\eqdef \frac{\cH(\dot{\varphi}(t)) ||\dot{\varphi}(t)||}{||(\dd \Phi)(\varphi(t))||}=\frac{1}{M_0} \frac{||\dot{\varphi}(t)||[||\dot{\varphi}(t)||^2+2\Phi(\varphi(t))]^{1/2}}{||(\dd \Phi)(\varphi(t))||}=
c(\dot{\varphi}(t))~~\forall t\in I~~,
\ee
which gives the evaluation of $c$ along $\varphi$.
\end{definition}

\noindent Notice that $c_\varphi$ is well-defined since the speed of a
non-constant cosmological curve cannot vanish at a critical time. We
have:
\be
c_\varphi(t)=+\infty \Longleftrightarrow t\in I_\crit~~.
\ee

\noindent For non-constant curves $\varphi$, the cosmological equation \eqref{eomsingle} reads:
\be
\nabla_t \dot{\varphi}(t)+ (\grad_{\cG} \Phi)(\varphi(t))=-||(\dd \Phi)(\varphi(t))|| c_\varphi(t) T(\dot{\varphi}(t))~~.
\ee
The {\em conservative approximation} consists of neglecting the right
hand side, which replaces \eqref{eomsingle} by the {\em conservative
equation}:
\be
\nabla_t \dot{\varphi}(t)+ (\grad_{\cG} \Phi)(\varphi(t))=0~~.
\ee
This approximation is discussed in Subsection \ref{sec:cons}.
The conservative approximation is accurate in the
{\em quasi-conservative regime} $c_\varphi(t)\ll 1$. When $c_\varphi(t)$ is large,
one can neglect the friction term instead, which amounts to
replacing \eqref{eomsingle} with the {\em dissipative equation}:
\be
\nabla_t \dot{\varphi}(t)+ \cH_\varphi(t)\dot{\varphi}(t)=0~~.
\ee
This defines the {\em dissipative approximation} (studied in Subsection
\ref{sec:friction}), which is accurate in the {\em strongly dissipative
regime} $c_\varphi(t)\gg 1$.

\begin{definition}
For any $\epsilon\in (0,1]$, the {\em conservative region }
$C_\epsilon(\cM,\cG,\Phi)$ of $(\cM,\cG,\Phi)$ is the open tubular neighborhood of
the zero section of $T\cM_0$ defined through:
\be
C_\epsilon(\cM,\cG,\Phi)\eqdef \{u\in T\cM_0\vert c(u)<\epsilon\}\subset T\cM_0~~.
\ee
We say that a vector $u\in T\cM_0$ satisfies the {\em conservative
condition} with parameter $\epsilon$ if $u\in C_\epsilon$, i.e. if:
\ben
\label{ucons1}
c(u)<\epsilon~~.
\een
We say that a cosmological curve $\varphi$ satisfies the conservative condition with
parameter $\epsilon\in (0,1]$ at time $t\in I$ if $\dot{\varphi}(t)\in
C_\epsilon(\cM,\cG,\Phi)$.
\end{definition}

\noindent  By Remark \ref{rem:f}, the conservative condition
\eqref{ucons1} is equivalent with:
\ben
\label{ucons2}
\kappa(u)<\frac{1}{2}\left[-1+\sqrt{1+\epsilon^2 M_0^2 \frac{||(\dd \Phi)(\pi(u))||^2}{\Phi(\pi(u))^2}}~\right]~~,
\een
i.e. with:
\ben
\label{ucons3}
||u||<A_\epsilon(\pi(u))
\een
where:
\ben
\label{Adef}
A_\epsilon\eqdef \sqrt{-\Phi+\sqrt{\Phi^2+\epsilon^2 M_0^2 ||\dd \Phi||^2}}~~
\een
is the {\em conservative bound function} of $(\cM,\cG,\Phi)$ at parameter $\epsilon$.

\begin{definition}
The {\em conservative closure} of $(\cM,\cG,\Phi)$ is the region:
\be
\overline{C_1}=\{u\in T\cM_0~\vert~c(u)\leq 1\}~~.
\ee
\end{definition}

\subsection{The tangential acceleration, characteristic angle and turning rate}

\noindent We next discuss further natural scalar and vector basic observables which are
on shell reductions of corresponding second order observables. For this, we start from a
natural vector observable of second order which we describe through its evaluation on arbitrary
curves (rather than in jet bundle language). Some of the following definitions are standard except for
certain conventional scale factors which we find convenient to eliminate in order to simplify
various formulas.

\begin{definition}
The {\em opposite relative acceleration} of a smooth curve
$\varphi:I\rightarrow \cM$ is the smooth function
$\veta_\varphi:I_\reg\rightarrow T\cM$ defined through:
\be
\veta_\varphi(t)=-\frac{1}{H_\varphi(t)}\frac{\nabla_t \dot{\varphi}(t)}{||\dot{\varphi}(t)||}=3{\hat \veta}_\varphi(t)~~\forall t\in I_\reg~~,
\ee
where:
\ben
\label{hatvetadef}
{\hat \veta}_\varphi(t)\eqdef -\frac{1}{\cH_\varphi(t)}\frac{\nabla_t \dot{\varphi}(t)}{||\dot{\varphi}(t)||}~~(t\in I_\reg)
\een
is the {\em rescaled acceleration} of $\varphi$.
\end{definition}

\begin{remark}
In reference \cite{genalpha}, the opposite relative acceleration
$\veta_\varphi(t)$ was called the {\em vector gradient flow parameter}
of $\varphi$ at time $t$, while its norm $||\veta_{\varphi}(t)||$ was called
the {\em scalar gradient flow parameter}.
\end{remark}

\begin{definition}
The {\em rescaled tangential acceleration} (a.k.a. {\em rescaled second
slow roll parameter}) of a smooth curve $\varphi:I\rightarrow \cM$ is
the function ${\hat \eta}^\parallel_\varphi:I_\reg\rightarrow \R$
obtained by projecting ${\hat \veta}_\varphi(t)$ on the unit tangent
vector $T_\varphi(t)$ to $\varphi$ at time $t\in I_\reg$:
\be
\hetap_\varphi(t)\eqdef\cG({\hat \veta}_\varphi(t),T_\varphi(t))~~\forall t\in I_\reg~~.
\ee
The {\em rescaled normal acceleration} of $\varphi$ is the function
${\hat \veta}^\perp_\varphi:I_\reg\rightarrow T\cM$ obtained by projecting
${\hat \veta}_\varphi(t)$ on the hyperplane $\Pi_\varphi(t)\eqdef
T_\varphi(t)^\perp\subset T_{\varphi(t)}\cM$ normal to
$T_\varphi(t)$ inside the tangent space $T_{\varphi(t)}\cM$:
\ben
\label{hatvetaperp}
{\hat \veta}_\varphi^\perp(t)\eqdef {\hat \veta}_\varphi(t)-\hetap_\varphi(t)T_\varphi(t)~~\forall t\in I_\reg~~.
\een
The {\em characteristic angle} $\theta_\varphi(t)\in [0,\pi]$ of
$\varphi$ at time $t\in I_\reg\cap I_\noncrit$ is the angle between
$\dot{\varphi}(t)$ and $(\grad\Phi)(\varphi(t))$. For any $t\in  I_\reg\cap I_\noncrit$,
we have:
\be
\hetap_\varphi(t)=||\hveta_\varphi(t)|| \cos\theta_\varphi(t)~~.
\ee
\end{definition}

\begin{remark}
Let $s$ be an increasing arc length parameter on the curve $\varphi$, thus $\dot{s}=||\dot{\varphi}||$.
We have:
\ben
\label{compacc}
[\nabla_t\dot{\varphi}(t)]^\parallel=\frac{\dd}{\dd t}||\dot{\varphi}(t)||=\ddot{s}~~,~~[\nabla_t\dot{\varphi}(t)]^\perp=\nabla_t\dot{\varphi}(t)-\left(\frac{\dd}{\dd t}\log ||\dot{\varphi}(t)||\right)\dot{\varphi}(t)~~,
\een
where we used the relation:
\be
\cG(\nabla_t \dot{\varphi}(t), \dot{\varphi}(t))=\frac{1}{2}\frac{\dd}{\dd t} ||\dot{\varphi}(t)||^2=||\dot{\varphi}(t)|| \, \frac{\dd}{\dd t} ||\dot{\varphi}(t)||~~,
\ee
which follows from the fact that $\cG$ is covariantly-constant. In particular, we have:
\be
\hetap_\varphi(t)=-\frac{1}{\cH_\varphi(t)}\frac{\dd}{\dd t}\log ||\dot{\varphi}(t)||=-
M_0\frac{\frac{\dd}{\dd t}\log ||\dot{\varphi}(t)||}{\sqrt{2\Phi(\varphi(t))}[1+\kappa_{\varphi}(t)]^{1/2}}~~\forall t\in I_\reg~~.
\ee
\end{remark}

\begin{definition}
The vector:
\ben
\label{bOmega}
\bOmega_\varphi(t)=-\nabla_tT_\varphi(t)\in T_{\varphi(t)}\cM~~(t\in I_\reg).
\een
is called the {\em turning rate vector} of $\varphi$ at time $t$.
\end{definition}

Notice that $\bOmega_\varphi(t)\perp T_{\varphi}(t)$.
Since $\dot{\varphi}(t)^\perp=0$, the defining relation \eqref{bOmega}
can be written as:
\be
\bOmega_\varphi(t)=-\frac{[\nabla_t\dot{\varphi}(t)]^\perp}{||\dot{\varphi}(t)||}~~.
\ee
The Frenet-Serret formulas give:
\be
\nabla_t T_{\varphi}(t)=\chi_\varphi(t) ||\dot{\varphi}(t)|| n_\varphi(t)~~,
\ee
where:
\be
n_\varphi(t)\eqdef \frac{N_\varphi(t)}{||N_\varphi(t)||}
\ee
with:
\be
N_\varphi(t)\eqdef (\nabla_t\dot{\varphi}_t)^\perp
\ee
and $\chi_\varphi(t)\eqdef ||\nabla_t\dot{\varphi}_t|| \in \R_{\geq
0}$ is the principal curvature of $\varphi$ at time $t$. Thus:
\be
\bOmega_\varphi(t)=\Omega_\varphi(t) n_\varphi(t)
\ee
where:
\be
\Omega_\varphi(t)\eqdef -\chi_\varphi(t)||\dot{\varphi}(t)||~~
\ee
is the {\em  scalar turning rate} of $\varphi$ at time $t$. In particular, we have:
\be
||\bOmega_\varphi(t)||=|\Omega_\varphi(t)|=|\chi_\varphi(t)| \, \, ||\dot{\varphi}(t)||~~.
\ee
In terms of $\bOmega_\varphi$, relation \eqref{hatvetaperp} takes the form:
\be
{\hat \veta}_\varphi^\perp(t)=\frac{\bOmega_\varphi(t)}{\cH_\varphi(t)}~~.
\ee
We have:
\be
\veta_\varphi(t)=\hetap_\varphi(t)T_\varphi(t)+\heta^\perp_\varphi(t)~~\forall t\in I_\reg
\ee
and hence:
\ben
\label{nablaheta}
\nabla_t\dot{\varphi}(t)=-\cH_\varphi(t) \left[\hetap_\varphi(t)\dot{\varphi}(t)+||\dot{\varphi}(t)||\heta^\perp_\varphi(t)\right]~~\forall t\in I_\reg~~.
\een

\paragraph{The rescaled acceleration field of $(\cM,\cG,\Phi)$.}

\noindent We next define a Finsler vector field on $\dot{T}\cM$ which
encodes the rescaled acceleration of all {\em cosmological} curves. It
is the on-shell reduction of the second order rescaled acceleration observable.

\begin{definition}
The {\em rescaled acceleration field} of $(\cM,\cG,\Phi)$ is the
Finsler vector field ${\hat \veta}\in \Gamma({\dot T}\cM,F\cM)$
defined on ${\dot T}\cM$ through:
\ben
\label{veta}
{\hat \veta}(u)\eqdef n(u)+\q(u)=n(u)+ \frac{\Xi(\pi(u))(n(u))}{[\kappa(u)(1+\kappa(u))]^{1/2}} n_\Phi(\pi(u))=n(u)+\frac{1}{c(u)}n_\Phi(\pi(u))~~,
\een
where $\q$ is the relative gradient field of $(\cM,\cG,\Phi)$ (see Definition \ref{def:relgrad}).
\end{definition}

\noindent Notice that the rescaled acceleration field determines the conservative function:
\be
c(u)=\frac{1}{||\q(u)||}=\frac{1}{||{\hat \veta}(u)-n(u)||}~~\forall u\in \dot{T}\cM_0~~.
\ee

\paragraph{The rescaled tangential acceleration function and normal acceleration field of $(\cM,\cG,\Phi)$.}

\

\begin{definition}
\label{def:etafunctions}
The {\em rescaled tangential acceleration function} (a.k.a. ``second slow roll function'') $\hetap:{\dot
T}\cM\rightarrow \R$ is the basic cosmological observable defined
through:
\ben
\label{etaomega}
\hetap(u)=\cG({\hat \veta}(u),n(u))=1+\frac{(\dd\Phi)(\pi(u))(u)}{\cH(u)||u||^2}=1+ \frac{\cos\theta(u)}{c(u)}~~.
\een
The {\em rescaled normal acceleration field} ${\hat \veta}^\perp\in
\Gamma({\dot T}\cM, F\cM)$ of $(\cM,\cG,\Phi)$ is the Finsler vector
field defined on ${\dot T}\cM$ through:
\ben
\label{hvetaperp}
{\hat \veta}^\perp(u)=\q^\perp (u)=\frac{(\grad\Phi)^\perp(u)}{\cH(u)||u||}=\frac{n_\Phi^\perp(u)}{c(u)}~~.
\een
\end{definition}

\noindent We have:
\be
\veta(u)=\hetap(u)n(u)+\heta^\perp(u)~~\forall u\in \dot{T}\cM~~.
\ee
Notice that $\hetap(u)=1$ and ${\hat
  \veta}^\perp(u)=0$ when $\pi(u)\in \Crit\Phi$.

\paragraph{The turning rate field of $(\cM,\cG,\Phi)$.}

\

\begin{definition}
The {\em turning rate field} of $(\cM,\cG,\Phi)$ is the Finsler vector field
$\bOmega\in \Gamma({\dot T}\cM,F\cM)$ given by:
\be
\bOmega(u)\eqdef \frac{(\grad \Phi)^\perp(u)}{||u||}~~(u\in \dot{T}\cM)~~.
\ee
\end{definition}

\noindent Relation \eqref{hvetaperp} reads:
\ben
\label{hetapOmega}
{\hat \veta}^\perp(u)=\q^\perp(u)=\frac{\bOmega(u)}{\cH(u)}~~.
\een

The following result shows that the basic scalar and vector observables introduced above are the
on-shell reductions of the corresponding second order observables. 

\begin{prop}
\label{prop:veta}
Suppose that $\varphi:I\rightarrow \cM$ is a cosmological curve. Then
for any $t\in I_\reg$ we have:
\be
{\hat \veta}_\varphi(t)={\hat \veta}(\dot{\varphi}(t))~~,~~\hetap_\varphi(t)=\hetap(\dot{\varphi}(t))~~,~~{\hat \veta}^\perp_\varphi(t)={\hat \veta}^\perp(\dot{\varphi}(t))~~,~~\bOmega_\varphi(t)=\bOmega(\dot{\varphi(t)})~~.
\ee
\end{prop}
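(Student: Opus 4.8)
The plan is to prove the first equality $\hveta_\varphi(t)=\hveta(\dot{\varphi}(t))$ directly from the cosmological equation, and then obtain the other three as its tangential projection, its normal complement, and its rescaling by $\cH$. The single mechanism underlying all four is the substitution of $\nabla_t\dot{\varphi}(t)$ dictated by \eqref{eomsingle}; once this is carried out for the full acceleration vector, the remaining identities are purely formal consequences of the definitions of $\hetap$, ${\hat \veta}^\perp$ and $\bOmega$ as the tangential part, normal part and $\cH$-rescaled normal part of $\hveta$.

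Concretely, I would start from the definition \eqref{hatvetadef} and use \eqref{eomsingle} in the form $\nabla_t\dot{\varphi}(t)=-\cH(\dot{\varphi}(t))\dot{\varphi}(t)-(\grad\Phi)(\varphi(t))$, together with $\cH_\varphi(t)=\cH(\dot{\varphi}(t))$ (the evaluation of the basic observable $\cH$). This yields
\be
\hveta_\varphi(t)=\frac{\dot{\varphi}(t)}{||\dot{\varphi}(t)||}+\frac{(\grad\Phi)(\varphi(t))}{\cH(\dot{\varphi}(t))\,||\dot{\varphi}(t)||}~~.
\ee
The first term is the unit tangent $n(\dot{\varphi}(t))=T_\varphi(t)$, while the second term, on setting $u=\dot{\varphi}(t)$ so that $\pi(u)=\varphi(t)$, is exactly the relative gradient field $\q(\dot{\varphi}(t))$ of Definition \ref{def:relgrad}. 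Comparing with the defining formula \eqref{veta} gives $\hveta_\varphi(t)=n(\dot{\varphi}(t))+\q(\dot{\varphi}(t))=\hveta(\dot{\varphi}(t))$ for all $t\in I_\reg$, which is the master identity.

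From here the remaining equalities follow by projection. Projecting the master identity onto $T_\varphi(t)=n(\dot{\varphi}(t))$ and invoking \eqref{etaomega} gives $\hetap_\varphi(t)=\cG(\hveta(\dot{\varphi}(t)),n(\dot{\varphi}(t)))=\hetap(\dot{\varphi}(t))$; subtracting the tangential part then produces ${\hat \veta}^\perp_\varphi(t)=\hveta(\dot{\varphi}(t))-\hetap(\dot{\varphi}(t))n(\dot{\varphi}(t))={\hat \veta}^\perp(\dot{\varphi}(t))$, in agreement with \eqref{hvetaperp}. For the turning rate I would either multiply the normal identity by $\cH_\varphi(t)=\cH(\dot{\varphi}(t))$ using the two relations ${\hat \veta}^\perp_\varphi(t)=\bOmega_\varphi(t)/\cH_\varphi(t)$ and \eqref{hetapOmega}, or argue directly: taking the $\perp$-component of \eqref{eomsingle} kills the friction term $\cH\dot{\varphi}$, so $[\nabla_t\dot{\varphi}(t)]^\perp=-(\grad\Phi)^\perp(\dot{\varphi}(t))$ and hence $\bOmega_\varphi(t)=-[\nabla_t\dot{\varphi}(t)]^\perp/||\dot{\varphi}(t)||=(\grad\Phi)^\perp(\dot{\varphi}(t))/||\dot{\varphi}(t)||=\bOmega(\dot{\varphi}(t))$.

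I do not anticipate a genuine obstacle: the proposition is precisely the statement that each curve-dependent quantity is the on-shell reduction of the corresponding Finsler field, and \eqref{eomsingle} is exactly the substitution rule implementing that reduction. The only care needed is bookkeeping — consistently identifying $\cH_\varphi(t)$ with $\cH(\dot{\varphi}(t))$ and $T_\varphi(t)$ with $n(\dot{\varphi}(t))$, and restricting throughout to $t\in I_\reg$ so that $\dot{\varphi}(t)\neq 0$ and every normalized object is well defined.
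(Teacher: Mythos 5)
Your proof is correct and follows essentially the same route as the paper's: substitute the cosmological equation into the definition of $\hveta_\varphi$ to obtain the master identity $\hveta_\varphi(t)=n(\dot{\varphi}(t))+\q(\dot{\varphi}(t))=\hveta(\dot{\varphi}(t))$, then read off the tangential and normal components. Your explicit treatment of $\bOmega_\varphi(t)=\bOmega(\dot{\varphi}(t))$ (via \eqref{hetapOmega} or the vanishing of the friction term under the $\perp$-projection) merely spells out what the paper leaves implicit in ``this immediately gives the conclusion.''
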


\begin{proof}
The rescaled acceleration of $\varphi$ can be expressed as follows
using the cosmological equation:
\be
{\hat \veta}_\varphi(t)=\frac{\cH_\varphi(t)\dot{\varphi}(t)+(\grad\Phi)(\varphi(t))}{\cH_\varphi(t)||\dot{\varphi}(t)||}={\hat \veta}(\dot{\varphi}(t))~~.
\ee
Thus:
\ben
\label{hetarel1}
\hetap_\varphi(t)=1+\frac{(\dd\Phi)(\varphi(t))(\dot{\varphi}(t))}{\cH_\varphi(t)||\dot{\varphi}(t)||^2}=1+M_0\frac{(\dd\Phi)(\varphi(t))(\dot{\varphi}(t))}{(2\Phi(\varphi(t)))^{3/2}
  \kappa_{\varphi}(t)[1+\kappa_{\varphi}(t)]^{1/2}}
\een
and:
\ben
\label{hetarel2}
{\hat \veta}^\perp_\varphi(t)=\frac{(\grad\Phi)^\perp(\varphi(t))}{\cH_\varphi(t)||\dot{\varphi}(t)||}=M_0 \frac{(\grad\Phi)^\perp(\varphi(t))}{2\Phi(\varphi(t))[\kappa_{\varphi}(t)(1+\kappa_{\varphi}(t))]^{1/2}}~~,
\een
where we used \eqref{kappanorm}. This immediately gives the conclusion.
\end{proof}

Summarizing, we consider four natural basic observables, namely
$\kappa$, $c$, $\hetap$ and $\bOmega$, the last of which is a vector
observable. We also introduce an auxiliary basic observable
$\theta$:

\begin{definition}
The {\em characteristic angle function} of $(\cM,\cG,\Phi)$ is the map $\theta:\dot{T}\cM_0\rightarrow [0,\pi)$ defined through:
\be
\cos\theta(u)=\cG(T(u),n_\Phi(\pi(u)))~~\forall u\in \dot{T}\cM~~,
\ee
where $n_\Phi\in \cX(\cM_0)$ is the normalized gradient field of $(\cM,\cG,\Phi)$. 
\end{definition}

\noindent The following statement is immediate:

\begin{prop}
For any curve $\varphi:I\rightarrow \cM$ and any $t\in I_\reg\cap I_\noncrit$, we have:
\be
\theta_\varphi(t)=\theta(\dot{\varphi}(t))~~,
\ee
where $\theta_\varphi$ is the characteristic angle of $\varphi$. 
\end{prop}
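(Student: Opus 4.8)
The plan is to establish the identity by unwinding both sides to a common expression, the only nontrivial input being that the gradient and the differential of $\Phi$ have equal pointwise norm. First I would check that both quantities are well defined at $t \in I_\reg \cap I_\noncrit$: the regularity condition $t \in I_\reg$ gives $\dot\varphi(t) \neq 0$, so $\dot\varphi(t) \in \dot{T}\cM$ and $T(\dot\varphi(t))$ is defined; the noncriticality condition $t \in I_\noncrit$ gives $\varphi(t) = \pi(\dot\varphi(t)) \in \cM_0$, so $(\dd\Phi)(\varphi(t)) \neq 0$ and $n_\Phi(\varphi(t))$ is defined. Together these place $\dot\varphi(t)$ in $\dot{T}\cM_0$, the domain of the characteristic angle function $\theta$, and ensure $(\grad\Phi)(\varphi(t)) \neq 0$ as well.

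Next I would expand each side. By definition $\theta_\varphi(t)$ is the angle in the Euclidean space $(T_{\varphi(t)}\cM, \cG_{\varphi(t)})$ between $\dot\varphi(t)$ and $(\grad\Phi)(\varphi(t))$, so
\be
\cos\theta_\varphi(t) = \frac{\cG(\dot\varphi(t), (\grad\Phi)(\varphi(t)))}{||\dot\varphi(t)||\, ||(\grad\Phi)(\varphi(t))||}~~.
\ee
On the other side, substituting $u = \dot\varphi(t)$ in the defining relation $\cos\theta(u) = \cG(T(u), n_\Phi(\pi(u)))$ and using $\pi(\dot\varphi(t)) = \varphi(t)$ together with the definitions of $T$ and $n_\Phi$ gives
\be
\cos\theta(\dot\varphi(t)) = \frac{\cG(\dot\varphi(t), (\grad\Phi)(\varphi(t)))}{||\dot\varphi(t)||\, ||(\dd\Phi)(\varphi(t))||}~~.
\ee

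The two numerators coincide, so the proof reduces to the single observation that the two denominators agree, i.e. that $||(\grad\Phi)(\varphi(t))|| = ||(\dd\Phi)(\varphi(t))||$. This is the standard fact that the musical isomorphism $\cG^\sharp : T^\ast\cM \to T\cM$ induced by the metric is a fiberwise isometry, so its image $\grad\Phi = \cG^\sharp(\dd\Phi)$ has the same norm as $\dd\Phi$ at each point. With this, $\cos\theta_\varphi(t) = \cos\theta(\dot\varphi(t))$, and since both angles lie in $[0,\pi]$, where cosine is strictly decreasing and hence injective, I conclude $\theta_\varphi(t) = \theta(\dot\varphi(t))$. I do not expect any genuine obstacle: the statement is a bookkeeping identity matching the curve-wise characteristic angle to its basic-observable counterpart, entirely parallel to Propositions \ref{prop:epsred} and \ref{prop:veta}, and the only named ingredient is the metric duality of $\grad\Phi$ and $\dd\Phi$.
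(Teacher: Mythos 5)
Your proof is correct and is essentially the paper's own argument: the paper states this proposition as ``immediate'' with no written proof, and your definitional unwinding is exactly what is meant, including the one small but genuine observation that $||(\grad\Phi)(m)||=||(\dd\Phi)(m)||$ (so that $n_\Phi$, which the paper normalizes by $||\dd\Phi||$, is a unit vector and the two cosine formulas agree), followed by injectivity of cosine on $[0,\pi]$.
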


\subsection{Reconstruction of the scalar field metric and potential from $\kappa$ and $\hetap$}

\noindent Knowledge of the first and second slow-roll
functions (equivalently, knowledge of the first IR function and of the
second slow-roll function) allows one to reconstruct the positive
homothety class of the pair $(\cG,\Phi)$ when $M_0$ is fixed.
Indeed, relation \eqref{etaomega} gives:
\ben
\label{omegaeta}
\Xi(\pi(u))(n(u))= [\kappa(u)(1+\kappa(u))]^{1/2} \left[-1+\hetap(u)\right]~~\forall u\in {\dot T}\cM~~,
\een
showing that $\kappa$ and $\hetap$ determine
$\Xi$. Since $\cM$ is connected, this implies that $\kappa$ and
$\hetap$ determine the scalar potential $\Phi$ up to a
multiplicative constant $A>0$. Now relation \eqref{kappanorm}
determines $\cG$ up to multiplication by $A$.  Thus $\kappa$ and
$\eta^{\parallel}$ determine the positive homothety class of the pair
$(\cG,\Phi)$. In particular, the rescaled first and second slow-roll
functions encode the same information as this homothety class.

\subsection{Geometric interpretation of $\hetap$}
\label{subsec:etageom}

\noindent Relation \eqref{etaomega} can
be written as:
\ben
\label{costheta}
\cos\theta(u)=c(u)[\hetap(u)-1]
\een
and requires:
\be
\hetap(u)\in \left[1-\frac{1}{c(u)},1+\frac{1}{c(u)}\right]~~.
\ee
In particular, the interval within which $\hetap(u)$
can take values is centered on $1$ and constrained by the value of
$c(u)$ and hence by the norm of $u$. This interval is very large in
the quasi-conservative regime $c(u)\ll 1$ and becomes narrow in the
strongly dissipative regime $c(u)\gg 1$, when $\hetap(u)$
is forced to be close to one (see Figure \ref{fig:ceta}). In particular,
the generalized ``ultra slow roll'' approximation $\hetap\approx 1$
is accurate in the strongly dissipative regime.

\begin{figure}[H] \centering
\includegraphics[width=.7\linewidth]{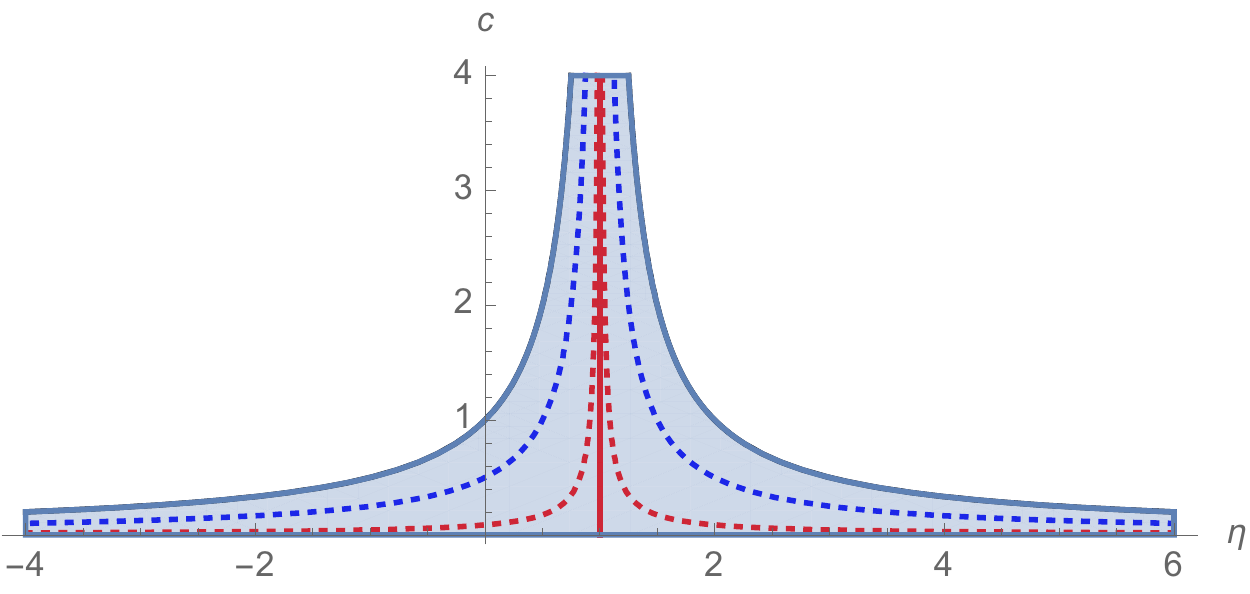}
\caption{Admissible domain of $c$ and $\hetap$. The
right and left boundaries of the domain are the hyperbolas $c({\hat
\eta}^\parallel-1)=-1$ and $c(\hetap-1)=+1$, which
correspond respectively to $\theta=0$ and $\theta=\pi$. The vertical
red line in the middle has equation $\hetap=1$ and
corresponds to $\theta=\pi/2$. The dotted blue hyperbolas correspond
to $\theta=\pi/3$ and $\theta=2\pi/3$ while the dotted red hyperbolas
correspond to $\theta=4\pi/9$ and $\theta=5\pi/9$.  The interval
within which $\hetap$ can vary for a fixed value of $c$
is obtained by intersecting the corresponding horizontal line with the
domain shown in the figure. This interval is centered on the value
$\hetap=1$ and its length decreases as $c$
increases. Accordingly, the strongly dissipative regime $c\gg 1$ forces ${\hat
\eta}^\parallel$ to be close to one and hence the generalized ``ultra slow roll''
approximation is accurate in this regime.}
\label{fig:ceta}
\end{figure}
    
Notice that $|\eta^\parallel(u)|\ll 1$ iff $\cos\theta(u)\approx -c(u)$,
which in particular requires that $u$ points towards decreasing values of $\Phi$.
On the other hand, \eqref{kappau} gives:
\ben
\label{unorm}
||u||=\sqrt{\Phi(\pi(u))\left[-1+\sqrt{1+4c(u)^2 ||\Xi(\pi(u))||^2}\right]}~~.
\een
Let $\pi(u)=m\in \cM_0$. When $\theta(u)=\pi/2$, relation
\eqref{costheta} requires $\hetap(u)=1$.  In this case,
$u$ is orthogonal to $n_\Phi(u)$ in $T_m\cM$ and its norm determines
and is determined by $c(u)$ through relation \eqref{unorm}. This is
the so-called ``ultra slow roll'' case, which generalizes the ultra slow roll regime
of one-field models. By
the remarks above, the strongly dissipative regime $c(u)\gg 1$ forces
$\hetap(u)\approx 1$ and hence a generalized ``ultra slow roll''
approximation is accurate in this regime.

Suppose now that $\theta(u)\neq \pi/2$, i.e. ${\hat
\eta}^\parallel(u)\neq 1$. Then eliminating $c(u)$ from
\eqref{costheta} and substituting in \eqref{unorm} gives:
\be
||u||=\sqrt{\Phi(\pi(u))\left[-1+\sqrt{1+\frac{4 ||\Xi(\pi(u))||^2}{(1-\hetap(u))^2} \cos^2\theta(u)}\right]}~~.
\ee
When $\hetap(u)$ is fixed, this determines $||u||$ as a
function of $\theta(u)$, where the sign of $\cos\theta(u)$ must equal
that of $\hetap(u)-1$ by \eqref{costheta} since $c(u)$ is
positive. Thus fixing the value of $\hetap$ forces $u$ to lie within a
hypersurface of revolution with axis given by the line determined by $(\grad\Phi)(m)$ inside $T_m\cM$. This hypersurface passes
through the origin of $T_m\cM$ at $\theta=0$ and cuts the line
determined by $(\grad\Phi)(m)$ again at a point corresponding to the
maximal norm of $u$, which is attained for $\theta=0$ or $\theta=\pi$,
depending on the sign of $\hetap-1$. This maximal value
of $||u||$ is given by:
\be
||u||_{\mathrm{max}}=\sqrt{\Phi(m)\left[-1+\sqrt{1+\frac{4 ||\Xi(m))||^2}{(1-\hetap(u))^2}}\right]}~~.
\ee
The surface of revolution degenerates to the plane perpendicular to
$(\grad\Phi)(m)$ when $\hetap(u)\rightarrow 1$. When
$\hetap(u)\rightarrow \pm \infty$, it degenerates to
the origin of $T_{m}\cM$. These remarks show that the constant roll
approximation $\hetap\approx C$ with $C$ a constant means that the
velocity $\dot{\varphi}(t)$ of the cosmological curve remains close to
the surface of revolution determined by $C$ inside $T_{\varphi(t)}\cM$.
In particular, the condition $\hetap(u)=0$ requires
$\cos\theta(u)<0$ i.e. $\theta\in (\frac{\pi}{2},\pi]$ and determines
the surface of revolution with equation:
\be
||u||=\sqrt{\Phi(\pi(u))\left[-1+\sqrt{1+4 ||\Xi(\pi(u))||^2 \cos^2\theta(u)}\right]}~~,
\ee
whose section with a plane inside $T_m\cM$ which contains the
vector $(\grad \Phi)(m)$ is plotted in figure
\ref{fig:RevSurface}. The second slow roll condition  $\hetap(\dot{\varphi}(t))=0$ requires that $\dot{\varphi}(t)$
lies close to this surface inside $T_{\varphi(t)}\cM$.

\begin{figure}[H] \centering
\includegraphics[width=.5\linewidth]{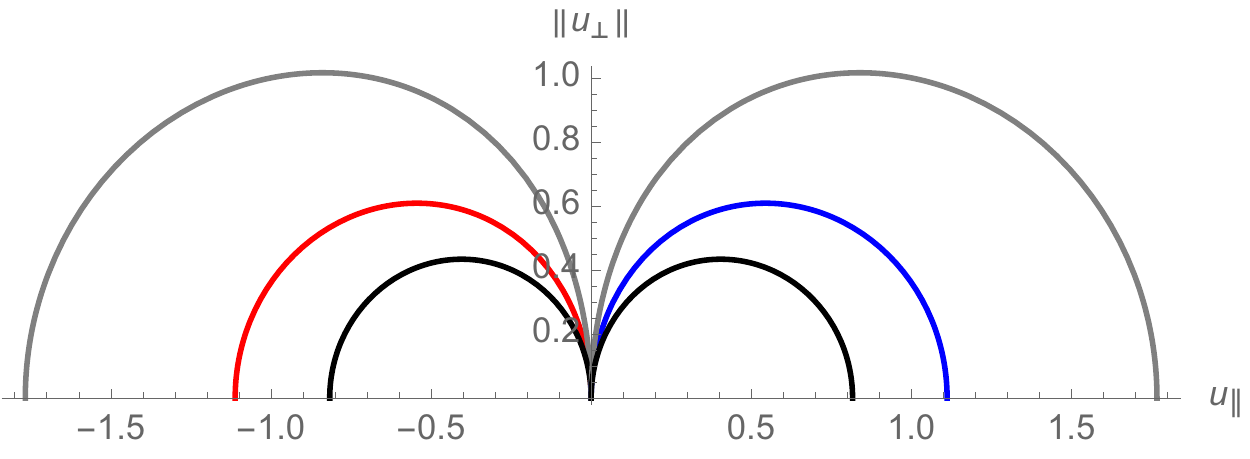}
\caption{Intersection of the surface of revolution determined inside
$T_m\cM$ by the condition $\hetap(u)=0$ for $\pi(u)=m$ with a
plane which contains the vector $(\grad \Phi)(m)$. In the figure,
we took $\Phi(m)=1$ and $||\Xi(m)||=1/2$. The horizontal
and vertical axes correspond to the projection $u_\parallel$ of $u$ on
$(\grad \Phi)(m)$ and the norm of the projection $u_\perp$ of $u$
on the plane orthogonal in $T_m\cM$ to the vector $(\grad
\Phi)(m)$.}
\label{fig:RevSurface}
\end{figure}

\subsection{The no roll condition}

\begin{definition}
The {\em no roll shell} of $(\cM,\cG,\Phi)$ is the zero level set of
the second slow roll function $\hetap$:
\be
\cF(\cM,\cG,\Phi)\eqdef \{u\in {\dot T}\cM~\vert~\hetap(u)=0\}~~,
\ee
which is a closed codimension one submanifold of ${\dot T}\cM_0$. 
\end{definition}

\noindent By the discussion of the previous subsection, the no roll
shell is a fiber sub-bundle of $T\cM_0$ whose fiber at each $m\in \cM_0$
is a connected surface of revolution around the axis determined by the vector
$(\grad \Phi)(m)\in T_m\cM$.

\begin{definition}
We say that a cosmological curve $\varphi:I\rightarrow \cM_0$
satisfies the {\em no roll condition} at time $t\in I_\reg$ if
$\dot{\varphi}(t)\in \cF(\cM,\cG,\Phi)$ i.e. if
$\hetap(\dot{\varphi}(t))=0$.
\end{definition}

\begin{remark}
The no roll condition at time $t\in I_\reg $ means that the covariant
acceleration $\nabla_t\varphi(t)$ is orthogonal on $\dot{\varphi}(t)$.
Since $\cG(\dot{\varphi},\nabla_t
\dot{\varphi})=\frac{1}{2}\frac{\dd}{\dd t}
||\dot{\varphi}||^2=||\dot{\varphi}||\frac{\dd}{\dd t}
||\dot{\varphi}||$, this amounts to the condition:
\be
\frac{\dd}{\dd t}||\dot{\varphi}(t)||=0~~,
\ee
which means that the proper length parameter $s$ along the cosmological curve
(which satisfies $\dot{s}=||\dot{\varphi}(t)||$) has vanishing second derivative at
time $t$:
\be
\ddot{s}=0~~.
\ee
\end{remark}

\subsection{The second slow roll condition}

\begin{definition}
Let $\epsilon\in (0,1]$. We say that a vector $u\in {\dot T}\cM$
satisfies the {\em second slow roll condition} of $(\cM,\cG,\Phi)$
with parameter $\epsilon>0$ if $|\hetap(u)|<\epsilon$.
The {\em second slow roll region} of $(\cM,\cG,\Phi)$ at parameter $\epsilon$
is the open subset of ${\dot T}\cM$ defined through:
\be
\cS^2_\epsilon(\cM,\cG,\Phi)\eqdef \{u\in {\dot T}\cM~\vert~|\hetap(u)|<\epsilon\}\subset {\dot T}\cM~~.
\ee
\end{definition}

\noindent Notice that $\cF(\cM,\cG,\Phi)\subset
\cS^2_\epsilon(\cM,\cG,\Phi)$ for all $\epsilon\in (0,1]$.

\begin{definition}
We say that a cosmological curve $\varphi:I\rightarrow \cM$ satisfies
the {\em second slow roll condition} with parameter $\epsilon$ at $t\in
I_\reg$ if $\dot{\varphi}(t)\in \cS^2_\epsilon(\cM,\cG,\Phi)$, i.e. if
the vector ${\dot \varphi}(t)$ satisfies the second slow roll
condition of $(\cM,\cG,\Phi)$ with parameter $\epsilon$.
\end{definition}

\noindent Since $\cS^2_\epsilon(\cM,\cG,\Phi)$ is an open subset of
${\dot T}\cM$ and the canonical lift
$\dot{\varphi}_\reg:I_\reg\rightarrow {\dot T}\cM$ of
$\varphi_\reg\eqdef \varphi\vert_{I_\reg}$ is continuous, the set:
\be
I_\epsilon^2\eqdef \dot{\varphi}_\reg^{-1}(\cS^1_\epsilon(\cM,\cG,\Phi))\subset I_\reg
\ee
is an open subset of $I_\reg$. 

\paragraph{Vectors of fixed norm which satisfy the second slow roll condition.}

When $u\in {\dot T}_m\cM_0$, relation \eqref{etaomega} gives:
\ben
\label{etaomegam}
\hetap_m(u)=1+ \frac{\Xi_m(T(u))}{[\kappa_m(||u||)(1+\kappa_m(||u||))]^{1/2}}=1+\frac{\cos\theta(u)}{c_m(||u||)}~~.
\een
For any $m\in \cM$ and any $N>0$, we denote by $\rS_m(x)$ the sphere of
radius $x$ in the Euclidean vector space $(T_m\cM,\cG_m)$. Moreover,
we set $\rS_m\eqdef \rS_m(1)$.

\begin{prop}
\label{prop:eta}
Suppose that $m\in\cM_0$ is not a critical point of $\Phi$. Then the
image of the sphere $\rS_m(x)\subset {\dot T}_m\cM$ of radius $N>0$
through the map $\hetap_m:\dot{T}_m\cM\rightarrow \R$ is given by:
\ben
\label{etamimage}
\hetap_m(\rS_m(x))=\left[1-\frac{1}{c_m(x)},1+\frac{1}{c_m(x)}\right]~~.
\een
Thus for any $N>0$ and any $\epsilon\in
\left[1-\frac{1}{c_m(x)},1+\frac{1}{c_m(x)}\right]$ there exists a vector $u\in {\dot
T}_m\cM$ which satisfies the conditions:
\ben
\label{etaeq}
\hetap(u)=\epsilon~~\mathrm{and}~~||u||=N~~.
\een
In particular, for any $N>0$ which satisfies the inequality: 
\ben
\label{Lineq}
c_m(x)\leq 1~~,
\een
there exists a vector $u\in {\dot T}_m\cM$ such that:
\ben
\label{etavanish}
\hetap_m(u)=0~~\mathrm{and}~~||u||=N~~.
\een
\end{prop}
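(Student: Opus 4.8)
The plan is to use that along a sphere of fixed radius the second slow roll function varies only through the characteristic angle, so that computing its image reduces to determining the range of $\cos\theta$. Fix $m\in\cM_0$ and a radius $x>0$ (in the notation of the statement, $N=x$). Because $m$ is noncritical we have $||(\dd\Phi)(m)||>0$, so $n_\Phi(m)=(\grad\Phi)(m)/||(\dd\Phi)(m)||$ is a well-defined unit vector of $(T_m\cM,\cG_m)$; moreover, since $\Phi>0$ and $x>0$, the number $c_m(x)$ is positive and finite and hence $1/c_m(x)$ is a well-defined positive real. Relation \eqref{etaomegam} then gives, for every $u\in\rS_m(x)$,
\be
\hetap_m(u)=1+\frac{\cos\theta(u)}{c_m(x)}~~,
\ee
because $c_m(||u||)=c_m(x)$ is constant on the sphere. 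Thus on $\rS_m(x)$ the only varying quantity is $\cos\theta(u)=\cG(T(u),n_\Phi(m))$.

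Next I would determine the range of $\cos\theta$. The normalization map $u\mapsto T(u)=u/x$ carries $\rS_m(x)$ onto the unit sphere $\rS_m=\rS_m(1)$, so as $u$ runs over $\rS_m(x)$ the quantity $\cos\theta(u)$ runs over the set of inner products $\cG(v,n_\Phi(m))$ with $v$ an arbitrary unit vector. By Cauchy--Schwarz these lie in $[-1,1]$, and the endpoints $\pm1$ are realized at $v=\pm n_\Phi(m)$. Since the unit sphere of a space of dimension $d\geq 2$ is connected and $v\mapsto\cG(v,n_\Phi(m))$ is continuous, the intermediate value theorem shows that every value of $[-1,1]$ is attained. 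The map $s\mapsto 1+s/c_m(x)$ is affine and increasing, so it sends $[-1,1]$ onto the closed interval with endpoints $1\pm1/c_m(x)$; combined with the displayed formula this gives
\be
\hetap_m(\rS_m(x))=\left[1-\frac{1}{c_m(x)},\,1+\frac{1}{c_m(x)}\right]~~,
\ee
which is \eqref{etamimage}.

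The remaining two assertions are immediate consequences. Given $\epsilon$ in the interval $\left[1-\frac{1}{c_m(x)},1+\frac{1}{c_m(x)}\right]$, the preceding step furnishes a unit vector $v$ with $\cG(v,n_\Phi(m))=c_m(x)(\epsilon-1)$; the vector $u\eqdef x\,v$ then lies in $\rS_m(x)\subset\dot{T}_m\cM$ and satisfies $\hetap(u)=\epsilon$ and $||u||=x=N$, which is \eqref{etaeq}. For the last statement, $0$ belongs to $\left[1-\frac{1}{c_m(x)},1+\frac{1}{c_m(x)}\right]$ if and only if $1-\frac{1}{c_m(x)}\leq 0$, i.e. exactly when $c_m(x)\leq 1$; under this hypothesis \eqref{etamimage} therefore produces a vector $u$ with $\hetap_m(u)=0$ and $||u||=N$, which is \eqref{etavanish}.

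The only genuinely delicate point is the claim that $v\mapsto\cG(v,n_\Phi(m))$ is onto $[-1,1]$, which relies on the connectedness of the unit sphere and so tacitly on $d\geq 2$; for $d=1$ only the two endpoints occur and the image in \eqref{etamimage} degenerates to $\{1\pm1/c_m(x)\}$. The rest is bookkeeping, the substantive inputs being the noncriticality of $m$ and the positivity of $\Phi$ and $x$, which together guarantee that $c_m(x)$ is finite and nonzero.
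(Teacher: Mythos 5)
Your proof is correct and follows essentially the same route as the paper, which simply reads off \eqref{etamimage} from relation \eqref{etaomegam} and notes that the remaining claims follow immediately. You additionally fill in the details the paper leaves implicit (constancy of $c_m(\|u\|)$ on the sphere, surjectivity of $\cos\theta$ onto $[-1,1]$ via Cauchy--Schwarz and connectedness of the unit sphere, and the $d\geq 2$ caveat), which is a faithful elaboration rather than a different argument.
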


\begin{proof}
Relation \eqref{etaomegam} implies that the image of $\rS_m$ through
$\hetap_m$ is given by \eqref{etamimage}.  The remaining statements
follows immediately from \eqref{etamimage}.
\end{proof}

\begin{remark}
Relation \eqref{Lineq} is the defining equation of the conservative closure $\overline{C_1}$ and hence is equivalent with
(see equation \eqref{ucons3}):
\ben
\label{Lineq2}
\kappa_m(x)\leq \frac{1}{2}[-1+\sqrt{1+4||\Xi_m||^2}]\Longleftrightarrow N^2\leq -\Phi(m)+\sqrt{\Phi(m)^2+M_0^2 ||(\dd \Phi)(m)||^2}~~.
\een
\end{remark}

\noindent Recall that $A_1$ denotes the conservative bound function at parameter $1$ (see \eqref{Adef}):
\be
A_1\eqdef \sqrt{ -\Phi+\sqrt{\Phi^2+M_0^2 ||(\dd \Phi)||^2}}~~.
\ee

\begin{cor}
\label{cor:etasol}
For any non-critical point $m\in \cM_0$ and any real number
\be
N\in (0,A_1(m)]~~,
\ee
the equation ${\hat \eta}_m^\parallel(u)=0$ has a solution $u\in {\dot
T}_m\cM$ of norm $||u||=N$. The set $\cF(\cM,\cG,\Phi)\cap \rS_m(x)$
of such solutions is a sphere of dimension $d-2$ when $N\in (0,A_1(m))$
and is reduced to a point when $N=A(m)$. When $N>A(m)$, the set
$\cF(\cM,\cG,\Phi)\cap \rS_m(x)$ is empty.
\end{cor}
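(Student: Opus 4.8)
The plan is to reduce the equation $\hetap_m(u)=0$ on the sphere $\rS_m(N)$ of radius $N$ to a single linear constraint on the direction of $u$, and then to recognize the solution set as a ``latitude sphere'' on the unit sphere of $T_m\cM$ whose topology is governed by the value $c_m(N)$.

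First I would invoke relation \eqref{etaomegam}, which for $u\in \dot{T}_m\cM_0$ gives $\hetap_m(u)=1+\frac{\cos\theta(u)}{c_m(||u||)}$. Since $m\notin\Crit\Phi$ and $N>0$, the quantity $c_m(N)$ is finite and strictly positive; hence on $\rS_m(N)$ the equation $\hetap_m(u)=0$ is equivalent to
\be
\cos\theta(u)=-c_m(N)~~.
\ee
By the definition of the characteristic angle function we have $\cos\theta(u)=\cG(T(u),n_\Phi(m))$, which depends only on the unit vector $T(u)=u/N$. Writing $v\eqdef T(u)$, the solution set $\cF(\cM,\cG,\Phi)\cap\rS_m(N)$ is therefore the image under the dilation $v\mapsto Nv$ of
\be
\Sigma\eqdef\{v\in\rS_m~\vert~\cG(v,n_\Phi(m))=-c_m(N)\}~~,
\ee
i.e. the set of unit vectors of $T_m\cM$ making the prescribed angle with $n_\Phi(m)$.

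Next I would determine $\Sigma$ explicitly. Decomposing $v=-c_m(N)\,n_\Phi(m)+w$ with $w\in n_\Phi(m)^\perp$ and imposing $||v||=1$ forces $||w||^2=1-c_m(N)^2$. Thus $\Sigma$ is a sphere of radius $\sqrt{1-c_m(N)^2}$ lying in the hyperplane $n_\Phi(m)^\perp\cong\R^{d-1}$ (shifted to the affine level $\cG(\cdot,n_\Phi(m))=-c_m(N)$); this sphere has dimension $d-2$ when $c_m(N)<1$, collapses to the single point $v=-n_\Phi(m)$ when $c_m(N)=1$, and is empty when $c_m(N)>1$. Finally I would convert these three regimes into conditions on $N$ by the equivalence $c_m(N)\le 1\Longleftrightarrow N^2\le A_1(m)^2$ recorded in \eqref{Lineq2}, with $A_1(m)^2=-\Phi(m)+\sqrt{\Phi(m)^2+M_0^2||(\dd\Phi)(m)||^2}$. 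Dilating $\Sigma$ by $N$ then yields a $(d-2)$-sphere (of radius $N\sqrt{1-c_m(N)^2}$) for $N\in(0,A_1(m))$, the single point $u=-N\,n_\Phi(m)$ for $N=A_1(m)$, and the empty set for $N>A_1(m)$, which is exactly the claim. The existence of at least one solution for $N\in(0,A_1(m)]$ also follows directly from Proposition \ref{prop:eta}.

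This argument is essentially a computation, so I do not expect a genuine obstacle; the only points demanding care are the degenerate transition at $c_m(N)=1$, where the latitude sphere collapses to a point, and the affine shift when identifying $\Sigma$ with a standard sphere. It is worth noting that the $(d-2)$-sphere description holds uniformly in $d$ under the convention that the $(-1)$-sphere is empty: in the single-field case $d=1$ one has $\rS_m=\{\pm n_\Phi(m)\}$, so $\cos\theta(u)=-c_m(N)$ admits no solution unless $c_m(N)=1$, consistent with an empty $(d-2)=(-1)$-sphere.
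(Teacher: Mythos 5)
Your proof is correct and follows essentially the same route as the paper: the paper's own proof is the single line ``Follows immediately from Prop \ref{prop:eta}'', and that proposition rests on exactly the relation \eqref{etaomegam} you start from, so reducing $\hetap_m(u)=0$ on $\rS_m(N)$ to $\cos\theta(u)=-c_m(N)$ and translating via \eqref{Lineq2} is precisely the intended argument. The only difference is that you make explicit what the paper treats as immediate --- the identification of the level set $\{\cG(v,n_\Phi(m))=-c_m(N)\}\cap\rS_m$ as a latitude sphere of radius $\sqrt{1-c_m(N)^2}$, together with the monotonicity of $c_m$ needed to separate the three regimes in $N$ --- which is a useful elaboration rather than a departure.
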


\begin{proof}
Follows immediately from Prop \ref{prop:eta}.
\end{proof}

\noindent The corollary implies that the no roll shell intersects the conservative
region for any parameter $\epsilon\in (0,1]$:
\be
\cF(\cM,\cG,\Phi)\cap C_\epsilon(\cM,\cG,\Phi)\neq 0~~\forall \epsilon\in (0,1]~~.
\ee

\subsection{The second order slow roll conditions}

\begin{definition}
Let $\epsilon_1,\epsilon_2\in (0,1]$. We say that a vector $u\in {\dot
T}\cM$ satisfies the {\em second order slow roll conditions} with
parameters $\epsilon_1$, $\epsilon_2$ if:
\be
\kappa(u)<\epsilon_1~~\mathrm{and}~~|\hetap(u)|<\epsilon_2~~.
\ee
The {\em second order slow roll region} of $(\cM,\cG,\Phi)$ at parameters
$\epsilon_1$ and $\epsilon_2$ is the open subset of ${\dot T}\cM$
defined through:
\be
\cS_{\epsilon_1,\epsilon_2}(\cM,\cG,\Phi)\eqdef \cS^1_{\epsilon_1}(\cM,\cG,\Phi)\cap  \cS^2_{\epsilon_2}(\cM,\cG,\Phi)\subset \dot{T}\cM~~.
\ee
\end{definition}

\noindent Corollary \ref{cor:etasol} and relation \eqref{picS1} imply:

\begin{prop}
For any $\epsilon_1,\epsilon_2\in (0,1]$, we have:
\be
\pi(\cS_{\epsilon_1,\epsilon_2}(\cM,\cG,\Phi))=\cM_0~~ \mathrm{and}~~\pi(\cS_{\epsilon_1}^1(\cM,\cG,\Phi)\cap \cF(\cM,\cG,\Phi))=\cM_0
\ee
\end{prop}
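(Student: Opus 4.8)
The plan is to derive both equalities from Corollary~\ref{cor:etasol} together with the fact that $\cS^1_{\epsilon_1}(\cM,\cG,\Phi)$ is an open tubular neighborhood of the zero section (relation \eqref{picS1}), the only extra input being that $\hetap\equiv 1$ over $\Crit\Phi$, which forces the relevant regions to project into $\cM_0$. I would prove the equality involving $\cF$ first, since the one for $\cS_{\epsilon_1,\epsilon_2}$ then follows almost formally.

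For the inclusion $\pi(\cS^1_{\epsilon_1}(\cM,\cG,\Phi)\cap\cF(\cM,\cG,\Phi))\subseteq\cM_0$, I would note that $\cF(\cM,\cG,\Phi)\subset\dot{T}\cM_0$: the condition $\hetap(u)=0$ excludes $\pi(u)\in\Crit\Phi$, because there $\hetap(u)=1$. For the reverse inclusion, fix $m\in\cM_0$. Non-criticality gives $A_1(m)>0$ and strict positivity of the potential gives $\sqrt{2\epsilon_1\Phi(m)}>0$, so I may choose $N\eqdef\frac{1}{2}\min\{A_1(m),\sqrt{2\epsilon_1\Phi(m)}\}>0$, which is strictly smaller than both bounds. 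Since $N\le A_1(m)$, Corollary~\ref{cor:etasol} yields a vector $u\in\dot{T}_m\cM$ with $\hetap(u)=0$ and $||u||=N$, so $u\in\cF(\cM,\cG,\Phi)$; and since $N<\sqrt{2\epsilon_1\Phi(m)}$ we get $\kappa(u)=\frac{N^2}{2\Phi(m)}<\epsilon_1$, so $u\in\cS^1_{\epsilon_1}(\cM,\cG,\Phi)$. Then $m=\pi(u)\in\pi(\cS^1_{\epsilon_1}\cap\cF)$, proving $\cM_0\subseteq\pi(\cS^1_{\epsilon_1}\cap\cF)$ and hence equality.

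To obtain $\pi(\cS_{\epsilon_1,\epsilon_2}(\cM,\cG,\Phi))=\cM_0$, I would use the inclusion $\cF(\cM,\cG,\Phi)\subset\cS^2_{\epsilon_2}(\cM,\cG,\Phi)$ (valid for all $\epsilon_2\in(0,1]$) to get $\cS^1_{\epsilon_1}\cap\cF\subseteq\cS^1_{\epsilon_1}\cap\cS^2_{\epsilon_2}=\cS_{\epsilon_1,\epsilon_2}$; applying $\pi$ and the equality just proved gives $\cM_0=\pi(\cS^1_{\epsilon_1}\cap\cF)\subseteq\pi(\cS_{\epsilon_1,\epsilon_2})$. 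Conversely, $\cS_{\epsilon_1,\epsilon_2}\subseteq\cS^2_{\epsilon_2}\subseteq\dot{T}\cM_0$, because if $\pi(u)\in\Crit\Phi$ then $\hetap(u)=1$ and hence $|\hetap(u)|=1\not<\epsilon_2$ whenever $\epsilon_2\le 1$; projecting gives $\pi(\cS_{\epsilon_1,\epsilon_2})\subseteq\cM_0$, so the two inclusions combine to the desired equality.

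There is no genuine obstacle here beyond bookkeeping; the one point deserving care is the nonemptiness of the admissible window for $N$ in the second step, which is exactly where both defining conditions of $\cM_0$---non-criticality (so $A_1(m)>0$) and positivity of $\Phi$ (so the fiber of $\cS^1_{\epsilon_1}$ over $m$ is a ball of positive radius, cf.\ \eqref{picS1})---are simultaneously used. I would also keep in mind that $\hetap$ is defined on all of $\dot{T}\cM$ and takes the value $1$ precisely over $\Crit\Phi$; this is what pins both projections down to $\cM_0$ rather than to the whole of $\cM$.
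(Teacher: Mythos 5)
Your proof is correct and follows essentially the paper's own route: the paper deduces the proposition directly from Corollary~\ref{cor:etasol} together with relation~\eqref{picS1}, which are exactly the two ingredients you combine, your explicit choice of $N=\frac{1}{2}\min\{A_1(m),\sqrt{2\epsilon_1\Phi(m)}\}$ merely filling in the bookkeeping the paper leaves implicit. One small caveat on your closing remark: $\hetap$ equals $1$ over $\Crit\Phi$ but not \emph{precisely} there---by \eqref{etaomega} it also equals $1$ at noncritical points where $u$ is orthogonal to $(\grad\Phi)(\pi(u))$---but your argument only uses the implication $\pi(u)\in\Crit\Phi\Rightarrow\hetap(u)=1$, so nothing breaks.
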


\begin{definition}
We say that a cosmological curve $\varphi:I\rightarrow \cM$ satisfies
the {\em second order slow roll condition} with parameters
$\epsilon_1,\epsilon_2\in (0,1]$ at $t\in I_\reg$ if
$\dot{\varphi}(t)\in \cS_{\epsilon_1,\epsilon_2}(\cM,\cG,\Phi)$.
\end{definition}

\noindent  Since $\cS_{\epsilon_1,\epsilon_2}(\cM,\cG,\Phi)$ is an
open subset of ${\dot T}\cM$ and the canonical lift
$\dot{\varphi}_\reg:I_\reg\rightarrow {\dot T}\cM$ of
$\varphi_\reg:I_\reg\rightarrow \cM$ is continuous, the set:
\be
I_{\epsilon_1,\epsilon_2}\eqdef \dot{\varphi}_\reg^{-1}(\cS_{\epsilon_1,\epsilon_2}(\cM,\cG,\Phi))=I^1_{\epsilon_1}\cap I^2_{\epsilon_2}\subset I_\reg
\ee
is an open subset of $I_\reg$.

\subsection{The slow roll-fast turn (a.k.a. slow roll conservative) conditions}

\noindent Since:
\be
{\hat \veta}(u)=\hetap(u)n(u)+{\hat \veta}^\perp(u)~~\forall u\in {\dot T}\cM~~,
\ee
the conditions:
\ben
\label{vetaconds}
|\hetap(u)|\ll 1~~,~~||{\hat \veta}^\perp(u)||\gg 1
\een
are equivalent with:
\be
|\hetap(u)|\ll 1~~\mathrm{and}~~||{\hat \veta}(u)||\gg 1~~.
\ee
The second of these amounts to $||\q(u)||\gg 1$ i.e. $c(u)\ll 1$, where:
\be
c(u)\eqdef \frac{1}{||\q(u)||}= \frac{1}{M_0}\frac{||u||\left[||u||^2+2\Phi(\pi(u))\right]^{1/2}}{||(\dd \Phi)(\pi(u))||}
\ee
is the {\em conservative function} of Section \ref{sec:cons}. Hence
conditions \eqref{vetaconds} are equivalent with:
\be
c(u)\ll 1~~\mathrm{and}~~|\hetap(u)|\ll 1~~.
\ee
In particular, the defining conditions of the {\em slow roll-fast turn approximation}:
\be
\bepsilon(u)\ll 1~~,~~|\hetap(u)|\ll 1~~,~~||{\hat \veta}^\perp(u)||\gg 1~~.
\ee
are equivalent with the {\em slow roll conservative conditions}:
\be
c(u)\ll 1~~,~~\kappa(u)\ll 1~~,~~~|\hetap(u)|\ll 1~~.
\ee

\begin{definition}
Let $\epsilon_1,\epsilon_2,\epsilon_3\in (0,1]$. The {\em slow roll
conservative region} of $(\cM,\cG,\Phi)$ is the open subset of ${\dot
T}\cM_0$ defined through:
\be
C_{\epsilon_1,\epsilon_2,\epsilon_3}(\cM,\cG,\Phi)\eqdef \cS_{\epsilon_1,\epsilon_2}(\cM,\cG,\Phi)\cap C_{\epsilon_3}(\cM,\cG,\Phi)\subset {\dot T}\cM_0~~. 
\ee
Thus $C_{\epsilon_1,\epsilon_2,\epsilon_3}(\cM,\cG,\Phi)$ consists of
those non-zero tangent vectors $u\in \dot{T}\cM$ which satisfy the
conditions:
\be
\kappa(u)<\epsilon_1~~,~~\hetap(u)<\epsilon_2~~,~~c(u)<\epsilon_3~~.
\ee
\end{definition}

\noindent Equation \eqref{ckappaomega} gives:
\ben
\label{normomega}
||\Xi(\pi(u))||=\frac{[\kappa(u)(1+\kappa(u))]^{1/2}}{c(u)}~~.
\een
Thus \eqref{omegaeta} takes the form:
\ben
\label{hatomega}
{\hat \Xi}(\pi(u))(n(u))=c(u)\left[-1+\hetap(u)\right]~~,
\een
where:
\be
{\hat \Xi}\eqdef \frac{\Xi}{||\Xi||}
\ee
is the normalization of the one-form \eqref{omega}.

\begin{prop}
\label{prop:srft}
For any $m\in \cM$ which is not a critical point of $\Phi$ and every
$N>0$ such that:
\ben
\label{cmcond}
c_m(x)\leq 1 \Longleftrightarrow N^2\leq -\Phi(m)+\sqrt{\Phi(m)^2+M_0^2 ||(\dd \Phi)(m)||^2}
\een
there exists a non-zero vector $u\in {\dot T}_m\cM$ which satisfies
the conditions:
\be
\hetap_m(u)=0~~\mathrm{and}~~||u||=N~~.
\ee
In particular, for any $\epsilon_1,\epsilon_3\in (0,1]$
there exists a non-zero vector $u\in {\dot T}_m\cM$ which satisfies:
\be
\kappa_m(u)<\epsilon_1~~,~~\hetap_m(u)=0~~,~~c_m(u)<\epsilon_3~~.
\ee
\end{prop}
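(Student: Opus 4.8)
The plan is to reduce everything to Proposition \ref{prop:eta}, which already pins down the image of each radius-$N$ sphere under $\hetap_m$, and then to combine it with the elementary observation that both $\kappa_m$ and $c_m$ are continuous functions of the norm that vanish at the origin. First I would dispatch the existence statement for a fixed $N$. By Proposition \ref{prop:eta}, for a non-critical point $m$ the image of the sphere $\rS_m(N)\subset \dot{T}_m\cM$ under $\hetap_m$ is the closed interval $\left[1-\frac{1}{c_m(N)},\,1+\frac{1}{c_m(N)}\right]$. The value $0$ lies in this interval precisely when $1-\frac{1}{c_m(N)}\leq 0$, i.e. when $c_m(N)\leq 1$; the upper endpoint imposes no constraint since $c_m(N)>0$ forces $1+\frac{1}{c_m(N)}>1>0$. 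This is exactly condition \eqref{cmcond}, so the first assertion is immediate: whenever $c_m(N)\leq 1$ there is a vector $u\in\dot{T}_m\cM$ of norm $N$ with $\hetap_m(u)=0$.

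For the ``in particular'' clause I would exploit that $\kappa_m(x)=\frac{x^2}{2\Phi(m)}$ and $c_m(x)=\frac{1}{M_0}\frac{x[x^2+2\Phi(m)]^{1/2}}{||(\dd\Phi)(m)||}$ are both continuous, strictly increasing in $x$, and vanish at $x=0$. Hence, given $\epsilon_1,\epsilon_3\in(0,1]$, I can choose $N>0$ small enough that simultaneously $\kappa_m(N)<\epsilon_1$ and $c_m(N)<\epsilon_3$; since $\epsilon_3\leq 1$, this also secures $c_m(N)\leq 1$. Applying the first part to this $N$ produces $u\in\dot{T}_m\cM$ with $||u||=N$ and $\hetap_m(u)=0$. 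Because $\kappa_m(u)$ and $c_m(u)$ depend only on $||u||$ (as recorded earlier in the excerpt), the three required relations $\kappa_m(u)<\epsilon_1$, $\hetap_m(u)=0$, $c_m(u)<\epsilon_3$ all hold for this same vector.

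I do not expect a genuine obstacle here: the statement is essentially a repackaging of Proposition \ref{prop:eta} (equivalently Corollary \ref{cor:etasol}) together with a limiting argument at the origin of $T_m\cM$. The only point deserving a word of care is the \emph{simultaneous} smallness of $\kappa_m(N)$ and $c_m(N)$, which is guaranteed precisely because both are continuous and vanish at $x=0$, so that a single sufficiently small $N$ serves both inequalities and, with $\epsilon_3\leq 1$, automatically meets $c_m(N)\leq 1$. Concretely, one may note that $\{\kappa_m<\epsilon_1\}$ and $\{c_m<\epsilon_3\}$ each cut out a nonempty interval $(0,N_0)$ of admissible norms, and pick any $N$ in their intersection.
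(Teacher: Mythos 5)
Your proof is correct and follows essentially the same route as the paper: both parts reduce to Proposition \ref{prop:eta}, combined with the fact that $\kappa_m(u)$ and $c_m(u)$ depend only on $||u||$. The only cosmetic difference is that for the second part the paper records the admissible radii explicitly, namely $N^2<\min\left(2\Phi(m)\epsilon_1,\,-\Phi(m)+\sqrt{\Phi(m)^2+\epsilon_3^2 M_0^2 ||(\dd \Phi)(m)||^2}\right)$, whereas you obtain a suitable $N$ by the softer (and equally valid) observation that $\kappa_m$ and $c_m$ are continuous, increasing and vanish at $x=0$, with $\epsilon_3\leq 1$ guaranteeing the hypothesis $c_m(N)\leq 1$ of the first part.
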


\begin{proof}
The first statement follows from Proposition \ref{prop:eta} upon
noticing that \eqref{cmcond} coincides with condition
\eqref{Lineq2}. The second statement follows from the same proposition
upon noticing that the inequalities:
\be
\kappa_m(x)< \epsilon_1~~,~~c_m(x)< \epsilon_3
\ee
amount to:
\be
N^2<\min\left(2\Phi(m)\epsilon_1,-\Phi(m)+\sqrt{\Phi(m)^2+\epsilon_3^2 M_0^2 ||(\dd \Phi)(\pi(u))||^2}\right)~~,
\ee
where we used the fact that the conservative condition $c_m(x)<\epsilon_3$ is
equivalent with \eqref{ucons2}.
\end{proof}

\begin{cor}
For $\epsilon_1,\epsilon_2,\epsilon_3\in (0,1]$, we have:
\be
\pi(C_{\epsilon_1,\epsilon_2,\epsilon_3}(\cM,\cG,\Phi))=\cM_0~~.
\ee
\end{cor}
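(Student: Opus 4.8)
The plan is to prove the two set inclusions $\pi(C_{\epsilon_1,\epsilon_2,\epsilon_3}(\cM,\cG,\Phi)) \subseteq \cM_0$ and $\cM_0 \subseteq \pi(C_{\epsilon_1,\epsilon_2,\epsilon_3}(\cM,\cG,\Phi))$ separately, since together they yield the claimed equality.

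For the inclusion $\pi(C_{\epsilon_1,\epsilon_2,\epsilon_3}(\cM,\cG,\Phi)) \subseteq \cM_0$, I would simply unwind the definition. By construction the slow roll conservative region is a subset of $\dot{T}\cM_0$, because the conservative condition $c(u) < \epsilon_3$ only makes sense for $u \in T\cM_0$ (the conservative function $c$ is defined on $T\cM_0$, its denominator $||(\dd\Phi)(\pi(u))||$ vanishing precisely on $\Crit\Phi$). Hence every $u \in C_{\epsilon_1,\epsilon_2,\epsilon_3}(\cM,\cG,\Phi)$ satisfies $\pi(u) \in \cM_0$, which gives this inclusion at once.

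For the reverse inclusion I would invoke Proposition \ref{prop:srft}. Fix an arbitrary non-critical point $m \in \cM_0$. The concluding statement of that proposition produces, for the given $\epsilon_1, \epsilon_3 \in (0,1]$, a non-zero vector $u \in \dot{T}_m\cM$ with $\kappa_m(u) < \epsilon_1$, $\hetap_m(u) = 0$, and $c_m(u) < \epsilon_3$. Since $\hetap_m(u) = 0 < \epsilon_2$ holds for every $\epsilon_2 \in (0,1]$, this $u$ satisfies all three defining conditions of the slow roll conservative region, so $u \in C_{\epsilon_1,\epsilon_2,\epsilon_3}(\cM,\cG,\Phi)$ and $\pi(u) = m$. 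Thus $m \in \pi(C_{\epsilon_1,\epsilon_2,\epsilon_3}(\cM,\cG,\Phi))$, and since $m \in \cM_0$ was arbitrary the reverse inclusion follows.

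I expect no genuine obstacle here: the corollary is essentially a repackaging of Proposition \ref{prop:srft}, whose substantive content (resting in turn on the sphere-image computation of Proposition \ref{prop:eta}) already guarantees a tangent vector meeting all three constraints at each non-critical point. The only mild point of care is the passage from the equality $\hetap_m(u) = 0$ furnished by the proposition to the strict inequality $\hetap(u) < \epsilon_2$ demanded in the definition of the region, which is immediate since $\epsilon_2 > 0$; in particular the result holds uniformly for all admissible $\epsilon_2$.
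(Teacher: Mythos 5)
Your proof is correct and follows essentially the same route as the paper, which presents this corollary as an immediate consequence of Proposition \ref{prop:srft}: the inclusion $\pi(C_{\epsilon_1,\epsilon_2,\epsilon_3})\subseteq\cM_0$ is definitional (the region lies in $\dot{T}\cM_0$ by construction), and the reverse inclusion follows from the vector with $\hetap_m(u)=0<\epsilon_2$, $\kappa_m(u)<\epsilon_1$, $c_m(u)<\epsilon_3$ furnished by that proposition at each $m\in\cM_0$. Nothing is missing.
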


\section{The first and second dynamical slow roll approximations}
\label{sec:SR}

The first and second slow roll conditions can be used to define {\em
dynamical approximations} of a cosmological curve by a solution curve
of an approximating equation obtained by neglecting the
corresponding parameters in the cosmological equation.

\subsection{The first dynamical slow roll approximation}

\noindent Relation \eqref{cHkappa} allows us to write the cosmological
equation \eqref{eomsingle} as:
\ben
\label{eomkappa}
\nabla_t \dot{\varphi}(t)+\frac{1}{M_0}\sqrt{2\Phi(\varphi(t))} \left[1+\kappa_{\varphi}(t)\right]^{1/2}\dot{\varphi}(t)+ (\grad_{\cG} \Phi)(\varphi(t))=0~~.
\een
The {\em first dynamical slow roll approximation} consists
of neglecting $\kappa_\varphi$ in this equation, i.e.
approximating a cosmological curve $\varphi$ by the solution
$\varphi_s$ of the {\em slow cosmological equation}:
\ben
\label{eomslow}
\nabla_t \dot{\varphi}_s(t)+\frac{1}{M_0}\sqrt{2\Phi(\varphi_s(t))} \dot{\varphi}_s(t)+ (\grad_{\cG} \Phi)(\varphi_s(t))=0
\een
which satisfies the initial conditions:
\be
\varphi_s(0)=\varphi(0)~~\mathrm{and}~~\dot{\varphi}_s(0)=\dot{\varphi}(0)~~.
\ee
The approximation is accurate around $t=0$ provided that
$\kappa_\varphi(0)\ll 1$, which amounts to the condition:
\be
\kappa(\varphi(0))=\kappa(\varphi_s(0))=\frac{||\dot{\varphi}(0)||^2}{2\Phi(\varphi(0))}\ll 1~~.
\ee
A necessary condition for the dynamical first slow
roll approximation $\varphi(t)\approx \varphi_s(t)$ to remain accurate
at $t\neq 0$ is that we have:
\ben
\label{kappas}
\kappa(\dot{\varphi}_s(t))\ll 1\Longleftrightarrow ||\dot{\varphi}(t)||\ll \sqrt{2\Phi(\varphi(t))}~~,
\een
where:
\be
\kappa(\dot{\varphi}_s(t))=\frac{||\dot{\varphi}_s(t)||^2}{2\Phi(\varphi_s(t))}~~.
\ee
Equation \eqref{kappas} requires that the lift of $\varphi$ to $T\cM$
be contained within a tubular neighborhood of the zero
section whose radius is much smaller that $\sqrt{2\Phi}$.
This condition constraints the time interval around zero for which the
approximation can remain accurate.

The geometric equation \eqref{eomslow} defines the {\em slow flow} on
the tangent bundle $T\cM$.  Notice that the stationary points of this
flow coincide with those of the cosmological flow, being given by the
trivial lifts of the critical points of $\Phi$. The cosmological
energy:
\be
E_{\varphi_s}(t)\eqdef E(\dot{\varphi}_s(t))=\frac{1}{2}||\dot{\varphi}_s(t)||^2+\Phi(\varphi_s(t))
\ee
of $\varphi_s$ satisfies:
\be
\frac{\dd E_{\varphi_s}(t)}{\dd t}=-\frac{\sqrt{2\Phi(\varphi_s(t))}}{M_0}||\dot{\varphi}_s(t)||^2
\ee
and attains its minima at the stationary points of the slow flow. It
follows that $E:T\cM\rightarrow \R_{>0}$ is a Lyapunov function for
the geometric dynamical system defined by \eqref{eomslow} on
$T\cM$. When $\cM$ is compact, the usual argument shows that the slow
flow is future complete and that the $\omega$-limit point of any slow solution
curve is a critical point of $\Phi$.

\subsection{The second dynamical slow roll approximation}

\noindent Relation \eqref{nablaheta} allows us to write the cosmological equation as:
\be
\cH_\varphi(t)(1-\hetap_\varphi(t))\dot{\varphi}(t)-||\dot{\varphi}(t)||\heta^\perp_\varphi(t)+(\grad\Phi)(\varphi(t))=0~~.
\ee
The {\em second dynamical slow roll approximation} consists of
neglecting $\hetap_\varphi(t)$ in this equation, i.e. neglecting the
parallel component of the covariant acceleration:
\be
[\nabla_t\dot{\varphi}(t)]^\parallel=\cG(\nabla_t\dot{\varphi}(t),T_\varphi(t))=\frac{\dd}{\dd t} ||\dot{\varphi}(t)||
\ee
in the cosmological equation. This amounts to approximating a cosmological curve $\varphi$ by the
solution $\varphi_\sigma$ of the {\em no second roll equation}:
\ben
\label{eom2sr}
[\nabla_t\dot{\varphi}_\sigma(t)]^\perp+\cH_{\varphi_\sigma}(t)\dot{\varphi}_\sigma(t)+(\grad\Phi)(\varphi_\sigma(t))=0~~
\een
which satisfies the initial conditions:
\be
\varphi_\sigma(0)=\varphi(0)~~\mathrm{and}~~\varphi_\sigma(t)=\varphi(t)~~.
\ee
Recall that (see \eqref{compacc}):
\ben
\label{nablatperp}
[\nabla_t\dot{\varphi}_\sigma(t)]^\perp=\nabla_t\dot{\varphi}_\sigma(t)-\left(\frac{\dd}{\dd t}\log ||\dot{\varphi}_\sigma(t)||\right)\dot{\varphi}_\sigma(t)~~.
\een
When $\dot{\varphi}(t)=0$, we define:
\be
[\nabla_t\dot{\varphi}(t)]^\perp\eqdef \nabla_t\dot{\varphi}(t)~~(t\in I_\sing)~~,
\ee
which agrees with the limit of \eqref{nablatperp} when $t$ approaches a singular point of $\varphi$.

The approximation is accurate at $t=0$ provided that:
\be
|\hetap_\varphi(0)|=|\hetap(\dot{\varphi}(0))|=|\hetap(\dot{\varphi}_\sigma(0))|=\Big{|} 1+\frac{\cos\theta(\dot{\varphi}(0))}{c(\dot{\varphi}(0))}\Big{|}  \ll 1~~.
\ee
A necessary condition that the approximation remains accurate at $t\neq 0$ is that we have:
\be
|\hetap(\dot{\varphi}_\sigma(t))|=\Big{|}1+\frac{\cos\theta(\dot{\varphi}_\sigma(t))}{c(\dot{\varphi}_\sigma(t))}\Big{|}\ll 1~~,
\ee
a condition which constrains the time interval around the origin on which the approximation can be applied.

Projecting \eqref{eom2sr} on the direction of
$T_{\varphi_\sigma}(t)$ and on the hyperplane inside
$T_{\varphi_\sigma(t)}\cM$ which is orthogonal to
$T_{\varphi_\sigma}(t)$ gives:
\beqan
\label{2sreqs}
&& \cH_{\varphi_\sigma}(t)||\dot{\varphi}_\sigma(t)||=-||(\dd\Phi)(\varphi_\sigma(t))||\cos\theta_{\varphi_\sigma}(t) \nn\\
&& ||\dot{\varphi}_\sigma(t)||^2 \chi_{\varphi_\sigma}(t)n_{\varphi_\sigma}(t)=(\grad\Phi)^\perp(\varphi_\sigma(t))~~,
\eeqan
where noticed that:
\be
[\nabla_t\dot{\varphi}_\sigma(t)]^\perp=||\dot{\varphi}_\sigma(t)||\nabla_tT_{\varphi_c}(t)=||\dot{\varphi}_\sigma(t)||^2 \chi_{\varphi_\sigma}(t)n_{\varphi_\sigma}(t)~~.
\ee
The second condition in \eqref{2sreqs} means that the oscullating
plane of $\varphi_\sigma$ contains the vector
$(\grad\Phi)(\varphi_\sigma(t))$ and that we have:
\ben
\label{2sreq2}
||\dot{\varphi}_\sigma(t)||^2 \chi_{\varphi_\sigma}(t)=||(\dd\Phi)(\varphi_\sigma(t))||\sin\theta_{\varphi_\sigma}(t)~~.
\een
This relation and the first equation in \eqref{2sreqs} determines the
principal curvature of $\varphi_\sigma$ in terms of its speed as:
\be
\chi_{\varphi_\sigma}(t)=\frac{||(\dd\Phi)(\varphi_\sigma(t))||}{||\dot{\varphi}_\sigma(t)||^2 }\left[1-\left(\frac{\cH_{\varphi_\sigma}(t)||\dot{\varphi}_\sigma(t)||}{||(\dd\Phi)(\varphi_\sigma(t))||}\right)^2\right]^{1/2}~~,
\ee
i.e.:
\ben
\chi_{\varphi_\sigma}(t)=\frac{||\Xi(\varphi_\sigma(t))||}{M_0\kappa(\dot{\varphi}_\sigma(t))}
\sqrt{1-c(\dot{\varphi}_\sigma(t))^2}~~.
\een
Notice that the first equation in \eqref{2sreqs} amounts to:
\be
c(\dot{\varphi}_\sigma(t))=-\cos\theta(\dot{\varphi_\sigma}(t))~~,
\ee
which shows that the flow defined by \eqref{eom2sr} on $T\cM$ preserves the no roll shell $\cF(\cM,\cG,\Phi)$. 

\section{The conservative and dissipative approximations}
\label{sec:CD}

In this section we discuss two dynamical approximations controlled by
the conservative parameter $c_\varphi$, namely the conservative and
dissipative approximations, which are obtained by taking this
parameter to be very small or very large. We start by explaining the
relation of these approximations to approximations controlled by the
norm of $\veta_\varphi$.

\subsection{Dynamical approximations controlled by the norm of the relative acceleration}

\noindent It is natural to consider the approximants of the
cosmological equation which are obtained by requiring that
$||\hveta_\varphi(t)||$ is very small, very large or close to one.

\paragraph{The gradient flow approximation.}

The {\em gradient flow
approximation} of \cite{genalpha} consists of replacing the cosmological
equation \eqref{eomsingle} with the {\em modified gradient flow
equation}:
\be
\cH_\varphi(t)\dot{\varphi}(t)+(\grad\Phi)(\varphi(t))=0~~,
\ee
whose integral curves are obtained from the gradient flow curves of
$\Phi$ by a certain reparameterization. This approximation (which was
further discussed in \cite{modular}) is accurate when the {\em small
relative acceleration condition} $||\veta_{\varphi}(t)||\ll 1$ holds.
Notice that this condition implies the small longitudinal relative
acceleration condition $|\hetap(u)|\ll 1$ and hence the gradient flow
approximation implies the second slow roll approximation.

Combining the gradient flow approximation with the slow motion
approximation (i.e. further neglecting $\kappa_\varphi(t)$ in the
rescaled Hubble parameter
$\cH_\varphi(t)=\frac{\sqrt{2\Phi(t)}}{M_0}(1+\kappa_\varphi(t))^{1/2}$)
results in the {\em IR approximation} of \cite{ren}. The latter
replaces \eqref{eomsingle} with the equation:
\be
\frac{1}{M_0}\sqrt{2\Phi(\varphi(t))}\dot{\varphi}(t)+(\grad\Phi)(\varphi(t))=0~~,
\ee
which is equivalent with the gradient flow equation of the {\em
classical effective potential} $V\eqdef M_0 \sqrt{2\Phi}$. 
The IR approximation is accurate when $\kappa_\varphi(t)\ll 1$ and
$||\veta_\varphi(t)||\ll 1$ and is a natural generalization to
multifield models of the second order slow roll approximation of
one-field cosmological models.

Notice that relation \eqref{veta} gives:
\ben
\label{hvetanorm}
||\hveta(u)||=\sqrt{1+\frac{1}{c(u)^2}+\frac{2\cos\theta(u)}{c(u)}}\in \left[\big{\vert} 1-\frac{1}{c(u)}\big{\vert} ,1+\frac{1}{c(u)}\right]~~.
\een
Thus:
\begin{itemize}
\item The conservative condition $c(u)\ll 1$ is {\em
equivalent} with the {\em large relative acceleration condition}
$||\veta(u)||\gg 1$, namely it forces $||\veta(u)||\approx
\frac{1}{c(u)}$.
\item The dissipative condition $c(u)\gg
1$ is {\em equivalent} with the {\em unit relative acceleration condition}
$||\veta(u)||\approx 1$.
\end{itemize}
On the other hand, the {\em small relative acceleration condition}
$||\veta(u)||\ll 1$ (which is used to define the gradient flow
approximation of \cite{genalpha}) does not constrain $c(u)$. Thus one
has the following approximations which are controlled by $||\hveta||$:

\begin{itemize}
\item The gradient flow approximation, which is accurate when
  $||\hveta_\varphi(t)||\ll 1$. This approximation was discussed in
  \cite{genalpha} and \cite{modular}.
\item The conservative approximation, which is accurate when
$||\hveta_\varphi(t)||\gg 1$, i.e. when $c_\varphi(t)\ll 1$.
\item The dissipative approximation, which is accurate when
$||\hveta_\varphi(t)||\approx 1$, i.e. when $c_\varphi(t)\gg 1$.
\end{itemize}

\noindent Each of these can be combined with the slow motion
approximation, which is accurate when $\kappa_\varphi(t)\ll 1$, i.e.
when $\bepsilon_\varphi(t)\ll 1$. We refer the reader to
\cite{genalpha} and \cite{modular} for further details of the gradient
flow approximation and to \cite{ren} and \cite{grad} for further
information on the IR approximation. Below, we discuss
the conservative and dissipative approximations.

\subsection{The conservative approximation}
\label{sec:cons}

\noindent The {\em conservative approximation} consists of considering only
non-critical cosmological curves and neglecting the friction term in
the cosmological equation. This approximation is accurate for a
noncritical cosmological curve $\varphi:I\rightarrow \cM_0$ when the
{\em kinematic conservative condition}:
\ben
\label{ccond}
c_\varphi(t) \ll 1 \Longleftrightarrow \dot{\varphi}(t)\in C_\epsilon(\cM,\cG,\Phi)~~\mathrm{for~some~positive}~\epsilon \ll 1
\een
is satisfied. Suppose that $0\in I$. When \eqref{ccond} holds at
$t=0$, the noncritical cosmological curve $\varphi$ is
well-approximated for small $|t|$ by the solution
$\varphi_c:I_c\rightarrow \cM$ of the {\em conservative equation} of the
scalar triple $(\cM,\cG,\Phi)$:
\ben
\label{cons}
\nabla_t \dot{\varphi}_c(t)+ (\grad\Phi)(\varphi_c(t))=0
\een
which satisfies the initial conditions:
\ben
\label{inconscond}
\varphi_c(0)=\varphi(0)~~\mathrm{and}~~\dot{\varphi}_c(0)=\dot{\varphi}(0)~~.
\een
In this case the approximation remains accurate for those cosmological times
$t$ close to zero which satisfy $\dot{\varphi}(t)\in
C_\epsilon(\cM,\cG,\Phi)$. This condition determines a relatively
open subset of $I$ whose connected component which contains $0$ is the
time interval on which the approximation remains accurate.

\paragraph{Conservation of energy for the conservative approximant.}

Notice that \eqref{cons} is the equation of the motion of a particle of
unit mass in the Riemannian manifold $(\cM,\cG)$ in the presence of
the potential $\Phi$. This motion is conservative in the sense that
the energy:
\ben
\label{Evarphi0}
E_{\varphi_c}\eqdef \frac{1}{2}||\dot{\varphi}_c(t)||^2+\Phi(\varphi_c(t))
\een
is independent of $t$ when $\varphi_c$ is a solution of \eqref{cons}. The
initial conditions \eqref{inconscond} determine the energy as:
\ben
\label{Econs}
E_{\varphi_c}=\frac{1}{2}||\dot{\varphi}(0)||^2+\Phi(\varphi(0))=E_\varphi(0):=E_0~~,
\een
where:
\be
E_{\varphi}(t)\eqdef \frac{1}{2}||\dot{\varphi}(t)||^2+\Phi(\varphi(t))
\ee
is the cosmological energy of $\varphi$, which is a
strictly-decreasing function of $t$ when $\varphi$ is not constant. We
have (see eq. \eqref{cME} for the definition of the set $\cM(E_0)$):
\be
\Phi(\varphi_c(t))\leq E_0~~\mathrm{i.e.}~~\varphi_c(t)\subset \cM(E_0)
\ee
since $||\dot{\varphi}_c(t)||^2\geq 0$. Equations \eqref{Evarphi0} and \eqref{Econs} give:
\ben
\label{speednormcons}
||\dot{\varphi}_c(t)||=\sqrt{2[E_0-\Phi(\varphi_c(t))]}~~
\een
and the rescaled Hubble parameter of $\varphi_c$:
\be
\cH_{\varphi_c}(t)=\cH(\dot{\varphi}_c(t))=\frac{1}{M_0}\sqrt{2E_0}
\ee
is independent of $t$. In particular, the e-fold function of
$\varphi_c$ is given by:
\ben
\label{cNcons}
\cN_{\varphi_c}(T)\eqdef \frac{1}{3}\int_0^T \dd t \cH_{\varphi_c}(t)=\frac{T}{3M_0}\sqrt{2E_0}
\een
and hence is a linear function of $T$. Moreover, we have:
\ben
\label{kappacons}
\kappa_{\varphi_c}(t)=\frac{||\dot{\varphi}_c(t)||^2}{2\Phi(\varphi_c(t))}=\frac{E_0-\Phi(\varphi_c(t))}{\Phi(\varphi_c(t))}=\frac{E_0}{\Phi(\varphi_c(t))}-1~~
\een
and the slow roll parameter of $\varphi_c$ is given by:
\ben
\bepsilon_{\varphi_c}(t)=\frac{3\kappa_{\varphi_c}(t)}{1+\kappa_{\varphi_c}(t)}=3\left[1-\frac{\Phi(\varphi_c(t))}{E_0}\right]~~.
\een
Thus $\dot{\varphi}_c(t)$ lies in the inflation region of $(\cM,\cG,\Phi)$ iff:
\be
\kappa_{\varphi_c}(t)<\frac{1}{2}\Longleftrightarrow \Phi(\varphi_c(t))>\frac{2E_0}{3}~~,
\ee
i.e. iff $\varphi_c(t)$ lies in the following sublevel set of $\Phi$:
\ben
\label{cRcons}
\cM(2E_0/3)\eqdef \{m\in \cM~\vert~ \Phi(\varphi_c(t))<\frac{2E_0}{3}\}\subset \cM
\een
which we call the {\em conservative inflation region} of $\cM$ at energy $E_0$.

\paragraph{The potential conservative condition.}

The conservative approximation $\varphi(t)\approx \varphi_c(t)$ implies that the
conservative parameter of $\varphi$ is approximated as
$c_\varphi(t)\approx c_{\varphi_c}(t)$, where:
\ben
\label{cphic}
c_{\varphi_c}(t)\eqdef c(\dot{\varphi}_c(t))=\frac{\cH(\dot{\varphi}_c(t)) ||\dot{\varphi}_c(t)||}{||(\dd \Phi)(\varphi_c(t))||}=
\frac{2}{M_0}\frac{\left[E_0(E_0-\Phi(\varphi_c(t)))\right]^{1/2}}{||(\dd \Phi)(\varphi_c(t))||}=c_{E_0}(\varphi_c(t))~~.
\een
Here $c_{E_0}:\cM_0(E_0)\rightarrow \R$ is the {\em $E_0$-reduced
conservative function} of $(\cM,\cG,\Phi)$, which is defined through:
\be
c_{E_0}\eqdef \frac{2}{M_0}\frac{\left[E_0(E_0-\Phi)\right]^{1/2}}{||\dd \Phi||}=\frac{\left[E_0(E_0-\Phi)\right]^{1/2}}
{||\Xi|| \Phi}~~\forall E_0> 0~~.
\ee
The initial conditions \eqref{inconscond} give:
\be
c_{\varphi}(0)=c({\dot \varphi}(0))=c(\dot{\varphi}_c(0))=c_{E_0}(\varphi(0))~~.
\ee
Consistency with \eqref{ccond} requires that the {\em potential
conservative condition}:
\ben
\label{potconscond}
c_{\varphi_c}(t)\ll 1\Longleftrightarrow c_{E_0}(\varphi_c(t))\ll 1
\een
is satisfied, i.e. that there exists a positive $\epsilon'\ll 1$ such that
\ben
\label{potconscond1}
\varphi_c(t)\in C_{E_0,\epsilon'}(\cM,\cG,\Phi)\eqdef \{m\in \cM_0~\vert~c_{E_0}(m)<\epsilon' \}\subset \cM_0~~.
\een
This condition constrains the cosmological times $t\neq 0$ for which
the conservative approximation can remain accurate.

\begin{remark}
We have:
\be
\frac{\dd \Phi(\varphi_c(t))}{\dd t}=(\dd \Phi)(\dot{\varphi}_c(t))~~
\ee
and:
\be
\frac{\dd^2 \Phi(\varphi_c(t))}{\dd t^2}=\Hess(\Phi)(\dot{\varphi}_c(t),\dot{\varphi}_c(t))+(\dd \Phi)(\nabla_t\dot{\varphi}_c(t))=\Hess(\Phi)(\dot{\varphi}_c(t),\dot{\varphi}_c(t))-
||\dd \Phi||^2~~.
\ee
\end{remark}

\subsection{The dissipative approximation}
\label{sec:friction}

\noindent The {\em dissipative approximation} consists of considering only
non-critical cosmological curves and neglecting the friction term in
the cosmological equation. This approximation is accurate for a
noncritical cosmological curve $\varphi:I\rightarrow \cM_0$ when the
{\em kinematic dissipative condition}:
\ben
\label{dcond}
c_\varphi(t) \gg 1
\een
is satisfied. Suppose that $0\in I$. When \eqref{ccond} holds at
$t=0$, the noncritical cosmological curve $\varphi$ is
well-approximated for small $|t|$ by the solution
$\varphi_d:I_c\rightarrow \cM$ of the {\em dissipative equation} of the
scalar triple $(\cM,\cG,\Phi)$:
\ben
\label{dissip}
\nabla_t \dot{\varphi}_d(t)+ \cH(\varphi_d(t)) \dot{\varphi}_d(t)=0
\een
which satisfies the initial conditions:
\ben
\label{indissipcond}
\varphi_d(0)=\varphi(0)~~\mathrm{and}~~\dot{\varphi}_d(0)=\dot{\varphi}(0)~~.
\een
In this case the approximation remains accurate for those cosmological
times $t$ close to zero which satisfy $c(\dot{\varphi}(t))\gg 1$. This
condition determines a relatively open subset of $I$ whose connected
component which contains $0$ is the time interval on which the
approximation remains accurate.

The dissipative equation \eqref{dissip} is a modified
(a.k.a. reparameterized) geodesic equation.  Indeed, let $s=s(t)$ be
an increasing parameter along the curve $\varphi$,
chosen such that $s(0)=0$. We have:
\be
\dot{\varphi}_d=\dot{s}\varphi'_d~~\mathrm{and}~~ \nabla_t\dot{\varphi}_d=\ddot{s} \varphi'_d+\dot{s}^2 \nabla_s \varphi'_d~~,
\ee
where the primes denote derivatives with respect to $s$. Hence \eqref{dissip} is
equivalent with:
\be
\nabla_s \varphi'_d(s)+\frac{\ddot{s}+\dot{s}\cH(\dot{s}\varphi'_d(s))}{\dot{s}^2}=0~~
\ee
and reduces to the geodesic equation $\nabla_s\varphi'_d(s)=0$
with affine parameter $s$ provided that $s$ satisfies:
\ben
\label{scond}
\frac{\ddot{s}}{\dot{s}^2}+\frac{1}{M_0}\sqrt{||\varphi'_d(s)||^2+2\frac{\Phi(\varphi_d(s))}{\dot{s}^2}}=0~~.
\een
Since:
\be
\frac{\ddot{s}}{\dot{s}^2}=-\frac{\dd}{\dd t}\left(\frac{1}{\dot{s}}\right)=-\dot{s}t''(s)=-\frac{t''(s)}{t'(s)}~~,
\ee
condition \eqref{scond} is equivalent with the following ODE for the inverse function $t=t(s)$:
\ben
\label{tode}
t''(s)-\frac{1}{M_0}[||\varphi'_d(s)||^2+2\Phi(\varphi_d(s)) t'(s)^2]^{1/2} t'(s)=0~~.
\een
It follows that $\varphi_d$ is a reparameterized geodesic of $(\cM,\cG)$, namely $\varphi(s)=\eqdef \varphi(t(s))$ is
a normalized geodesic. The initial conditions \eqref{indissipcond} amount to:
\be
\varphi(0)=\varphi(0)~~\mathrm{and}~~\varphi'_d(0)=t'(0)\dot{\varphi}(0)~~.
\ee
Since $||\varphi'_d(0)||=1$, the second of these conditions implies:
\be
t'(0)=\frac{1}{||\dot{\varphi}(0)||}~~,
\ee
which together with $t(0)=0$ specifies the initial condition for the
desired solution $t(s)$ of \eqref{tode}. Moreover, the initial
conditions for $\varphi_d$ can be written as:
\ben
\label{phidincond}
\varphi_d(0)=\varphi(0)~~\mathrm{and}~~\varphi'_d(0)=T_\varphi(0)=\frac{\dot{\varphi}(0)}{||\dot{\varphi}(0)||}~~.
\een
When $\varphi$ is a maximal cosmological curve, the curve
$\varphi_d(s)$ can be taken to the unique maximal normalized geodesic
of $(\cM,\cG)$ which satisfies these conditions.

For the approximation of $\varphi(t)$ by $\varphi_d(t)$ to remain
accurate for $t\neq 0$, it is necessary that the {\em effective
dissipative condition}:
\be
c_{\varphi_d}(t)\ll 1
\ee
be satisfied. Since $\varphi_d(s)$ is a normalized geodesic, we have:
\be
||\dot{\varphi}_d(t)||=\dot{s}=\frac{1}{t'(s)}~~.
\ee
Hence:
\be
\kappa_{\varphi_d}(t)=\frac{1}{2t'(s)^2 \Phi(\varphi_d(s))}
\ee
and:
\be
c_{\varphi_d}(s)=\frac{\sqrt{1+2t'(s)^2 \Phi(\varphi_d(s))}}{2 t'(s)^2 \Phi(\varphi_d(s))||\Xi(\varphi_d(s))||}=\frac{\sqrt{1+2t'(s)^2 \Phi(\varphi_d(s))}}{M_0 t'(s)^2 ||(\dd \Phi)(\varphi_d(s))||}~~.
\ee
Notice that
$c_{\varphi_d}(0)=c(\dot{\varphi}_d(0))=c(\dot{\varphi}(0))=c_\varphi(0)$
since $\dot{\varphi}_d(0)=\dot{\varphi}(0)$.

As explained in Subsection \ref{subsec:etageom}, the dissipative
condition $c(\dot{\varphi}(t))\gg 1$ forces $\hetap(\dot{\varphi}(t))$
to be very close to one. Hence the generalized ``ultra slow roll'' approximation holds
in the strongly dissipative regime, where the dissipative approximation
is accurate.

\section{The limits of large and small rescaled Planck mass}
\label{sec:Planck}

\noindent Recall from \cite{ren} that universal similarities can be used to absorb all
parameters of the model into the rescaled Planck mass $M_0$. Thus it
is natural to consider the limits when $M_0$ is very large and very
small.

When $M_0\gg 1$, the friction term can be neglected and hence
cosmological curves are well-approximated by solutions of the
conservative equation \eqref{cons}.

When $M_0\ll 1$, the scale transformation with parameter
$\epsilon=M_0$ brings \eqref{eomsingle} to the form:
\ben
\label{smallM0}
M_0^2\nabla_t \frac{\dd\varphi_{M_0}(t)}{\dd t}+\left[M_0^2||\frac{\dd\varphi_{M_0}(t)}{\dd t}||^2+2\Phi(\varphi_{M_0}(t))\right]^{1/2}\frac{\dd\varphi_{M_0}(t)}{\dd t}+ (\grad_{\cG} \Phi)(\varphi_{M_0}(t))=0~~,
\een
where $\varphi_{M_0}(t)=\varphi(t/M_0)$. Hence the limit of very small
$M_0$ coincides with the infrared limit of \cite{ren} at parameter
$\epsilon=M_0$. In this limit, the rescaled equation \eqref{smallM0}
approximates as:
\be
\frac{\dd\varphi_{M_0}(t)}{\dd t}+(\grad V_1)(\varphi_{M_0}(t))\approx 0~~,
\ee
where:
\be
V_1\eqdef \sqrt{2\Phi}
\ee
coincides with the classical effective potential of \cite{ren} for $M_0=1$.
Thus $\varphi(t)$ is well-approximated by the curve $\varphi_0(t)=\varphi_1(M_0 t)$, where $\varphi_1(t)$
is the solution of the gradient flow equation of $V_1$:
\ben
\label{gf1}
\frac{\dd\varphi_1(t)}{\dd t}+(\grad V_1)(\varphi_1(t))= 0~~
\een
which satisfies the initial condition:
\be
\varphi_1(0)=\varphi(0)~~.
\ee
Notice that \eqref{gf1} is equivalent with the condition that
$\varphi_0$ satisfies the gradient flow equation:
\ben
\label{gf}
\frac{\dd\varphi_0(t)}{\dd t}+(\grad V)(\varphi_0(t))= 0~~
\een
of the classical effective scalar potential:
\be
V=M_0 V_1=M_0\sqrt{2\Phi}
\ee
introduced in \cite{ren}. It also satisfies the initial condition:
\be
\varphi_0(0)=\varphi(0)~~.
\ee
The approximation is most accurate for {\em infrared optimal curves},
which satisfy:
\be
\dot{\varphi}(0)=-M_0(\grad V_1)(\varphi(0))\Longleftrightarrow  \dot{\varphi}(0)=-\frac{M_0}{\sqrt{2\Phi}}(\grad \Phi)(\varphi(0))~~.
\ee
This amounts to the condition that $\dot{\varphi}(0)$ belongs to the
gradient flow shell of $(\cM,\cG,V)$:
\be
\dot{\varphi}(0)=-(\grad V)(\varphi(0))~~.
\ee
The approximation remains accurate for $t\neq 0$ when one can neglect the acceleration
and kinetic energy terms in \eqref{eomsingle}, which requires:
\be
\kappa_\varphi(t)\eqdef \frac{||\dot{\varphi}(t)||^2}{2\Phi(\varphi(t))}\ll 1
\ee
and:
\be
{\tilde \kappa}_\varphi(t)\eqdef \frac{||\nabla_t \dot{\varphi}(t)||}{||(\dd\Phi)(\varphi(t))||}\ll 1~~.
\ee
Notice that ${\tilde \kappa}_\varphi(t)$ coincides with the second IR parameter of \cite{ren}.
Thus:

\

\noindent {\em In leading order, the large rescaled Planck mass limit
$M_0\gg 1$ reproduces the conservative approximation, while the small
rescaled Planck mass limit $M_0\ll 1$ reproduces the IR
approximation.}

\section{Conclusions and further directions}
\label{sec:conclusions}

We gave a careful geometric construction of natural basic observables
for multifield cosmological models with arbitrary scalar manifold
topology. Some of these first order observables are obtained by
on-shell reduction of second order observables and their definition on
the tangent bundle of the scalar manifold is not obvious a priori. We
also discussed relations between these observables and the regions of
interest which they determine within the tangent bundle of the scalar
manifold. Finally, we described a system of dynamical approximations
of the cosmological equation which are defined by imposing conditions
on these basic observables -- some of which were not considered
systematically before. This defines a hierarchy of qualitatively
distinct dynamical regimes which deserve detailed study.

Our discussion of these approximations only addressed their most basic
features and much remains to be done. In particular, each dynamical
approximation considered in the paper could be studied by expansion
techniques with a view towards extracting asymptotic approximation
schemes for cosmological curves in each
dynamical regime we have identified. This appears to be quite involved
since it requires methods from the asymptotic theory of nonlinear
geometric ODEs and the control of various bounds involving the scalar potential and
its derivatives as well as the distance function of $(\cM,\cG)$. In
principle, the dynamical approximations which we identified could
also be used to devise numerical approximation methods which could
shed light on the corresponding dynamical regimes. Finally, it would
be interesting to study these regimes in detail for the class of tame
two field models, similar to the study of the IR regime performed in
\cite{grad}. We hope to report on these and related questions in the
future.

\section*{Acknowledgments}

\noindent This work was supported by national grant PN 19060101/2019-2022 and
by a Maria Zambrano Fellowship.

\end{document}